\pgfplotsset{compat=1.17}
\newtheorem{theorem}{Theorem}[section]
\newtheorem{lemma}{Lemma}[section]
\newtheorem{definition}{Definition}[section]
\newtheorem{remark}{Remark}[section]
\journal{}
\begin{document}

\begin{frontmatter}
\title{The efficiency quantum secret sharing schemes on hypercyclic quantum structures}

\begin{abstract}
The generic construction of efficient perfect quantum secret sharing (PQSS) schemes based on hypergraphs with three hyperedges is presented in this paper,and these schemes can reduce the quantum communication cost during share distribution. We first prove that hypercycles with three hyperedges are quantum access structures if and only if they can be classified into 12 non-isomorphic types under hypergraph isomorphism, and these hypercyclic quantum access structures include all four types of hyperstars with three hyperedges. In prior work (Li et al., Inf. Sciences, vol. 664, 2024, 120202), the efficient PQSS schemes were constructed for the above hyperstars using perfect classical secret sharing (PCSS) schemes with optimal information rates(OIR). However, for some hypercyclic access structures, the method described above fails to yield the CPSS schemes with OIR. Then we explicitly construct the CPSS schemes with OIR for these hypercyclic access structures by combining Simmons's geometric method with Shamir's threshold scheme. Finally, We propose the concept of the efficiency of minimally authorized subsets(MAS), and prove that these PQSS schemes achieve the high efficiency.
\end{abstract}
\begin{keyword}
hypercycle, quantuam access structure, quantum secret sharing, optimal information rate.
\end{keyword}

\end{frontmatter}
%---------------------

\section{Introduction}

Quantum secret sharing (QSS) is a quantum cryptographic protocol for securely distributing a secret among multiple participants with quantum information. QSS schemes sharing a classical secret were first proposed by Hillery et al \cite{ref1}, while the schemes sharing a quantum secret were discussed by Cleve et al. [2]. In QSS schemes, the authorized sets of participants are allowed to recover the secret while unauthorized sets ofparticipants are not allowed to have any information about the secret. We call the set composed of all the authorized sets the access structure, denoted by $\varGamma$. An important class of such QSS schemes is the $((t,n))$ quantum threshold schemes \cite{ref1,ref2,ref3,ref4,ref5}. \cite{ref1} first proposed (2,2) and (3,3)-threshold schemes based on GHZ states. Subsequently, Cleve and Wang  introduced (3,4)) and $((t,2t-1))$-threshold access structures \cite{ref2} using  error-correcting codes and local distinguishability of orthogonal multipartite entangled states\cite{ref2,ref5}, respectively. These works further extended to more general $((t,n))$-threshold access structures \cite{ref6, ref7, ref8, ref9, ref10}. However, the above-mentioned $((t,n))$-threshold access structures all correspond to special hypergraph access structure\cite{ref11}.
For example, considering a set of seven participants ${P_1, P_2, ..., P_7}$, let
\begin{equation*}
	\begin{array}{c}
		\varGamma_{0}^{(1)}=\{i j k l \mid 1 \leq i<j<k<l \leq 7\},
	\end{array}
\end{equation*}

\noindent{here,} the authorized subsets $P_i P_j P_k P_l(1=<i<j<k<l\leq 7)$ in $\varGamma^{(1)}_0$ are briefly denoted as $ijkl$.
$\varGamma^{(1)}_0$ belongs to a threshold access structure, and it corresponds to a special hypergraph with the number of vertices being seven, and each hyperedge contains four different vertices. But most quantum network structures emerging from practical scenarios correspond to irregular hypergraphs \cite{ref12,ref13,ref14,ref15}. For example, let $\varGamma_{0}^{(2)}=\{1234,1267,456\}$, then $\varGamma_{0}^{(2)}$ corresponds to a hypergraph which is an irregular one, see Fig.1.
This requires the establishment of an efficient PQSS schemes on the general  quantum access structures.

\begin{figure}[!t]
	\centering
	\includegraphics[width=1.5in]{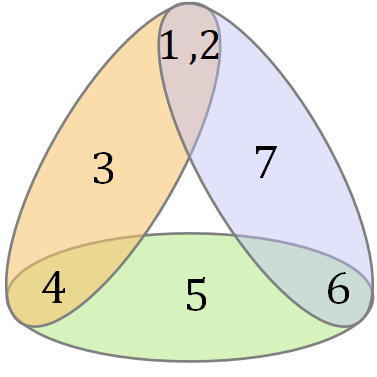}
	\caption{The hypergraph corresponding to minimal access structure $\varGamma^{(2)}_0$.}
	\label{fig_1.1}
\end{figure}

Any hypergraph access structure can be realized using single-particle QSS schemes, however, not all hypergraph access structures can be implemented through multi-particle entangled-state QSS schemes. As the definition of quantum access structures is conditional \cite{ref12}. \cite{ref12} shows that a classical access structure qualifies as a quantum access structure if and only if any two of its authorized sets have non-empty intersection. For instance, as demonstrated in \cite{ref2} that classical $(t,n)$-threshold access structures can serve as quantum threshold access structures if $n\leq 2t$. \cite{ref14} constructed some efficient QSS schemes based on hyperstar quantum access structures with three hyperedges. In this paper, we focus on PQSS schemes which share a classical secret by single particles, and systematically develop the efficient PQSS schemes on hypercycle quantum access structures.

The rest of this paper is organized as follows. In next section, we discuss the related works. we present the PCSS schemes with OIR for the hypercircle quantum access structures in Section 3. Section 4 is devoted to QSS schemes on these quantum access structures. In Section 5, we prove that these schemes are all efficient. And section 6 is the conclusion. We present all 83 hypercycle quantum access structure containing seven participants, and calculate their OIR, and these results are given in the Appendix.

%====================================================================================================
\section{Related works}

Let $F_p$ be a finite field, where $p$ is an odd prime number, $m$ be a positive integer with $p>m$, and $F^{m}_{p}$ be the $m$-dimensional column vector space over $F_p$. We use the notation ${[m]:}=\{1,2,...,m\}$ and ${[i,j]:=\{i,i+1,...,j\}}$. For conciseness, denote the tuple $(A_1, A_2, ...,A_m)$ by $A_{[m]}$.

\subsection{Hypergraph access structures}

The relevant concepts in this subsection are mainly from Ref. \cite{ref11}.

Suppose that there are $n$ participants set denoted as $P=\{P_1, P_2, ..., P_n\}$, $\varGamma$ is the access structure on $P$,  $\varGamma_0$ is a set of all the MAS of $\varGamma$. Obviously, $\varGamma_0 \subseteq \varGamma$. Each MAS of $\varGamma_0$ can be regarded as a hypergraph $H(P, \varGamma_0)$, where $P$ represents the vertex set of $H(P, \varGamma_0)$, and each MAS in $\varGamma_0$ can be regarded as $H(P, \varGamma_0)$.

\cite{ref11} denotes hypergraph as $H(V, E)$, where $V$ is the vertex set and $E$ represents the set consisting of all edges of the hypergraph. In the section 3, we will use $H(P, \varGamma_0)$ instead of $H(V, E)$.

In this paper, all hypergraph refers to connected hypergraph.

\begin{definition}
	\cite{ref11} For a hypergraph $H(V, E)$, if there exists a hyperedge sequence $(E_0, E_1,\cdots, E_{m-1})$ in $E$ such that\\
	(1) $E_{i}\cap E_{(i+1)\mathrm{~mod~}m}\neq \emptyset, i=0, 1, \cdots, m-1$;\\
	(2) For any $i$, $E_{i} \cap E_{j}=\emptyset$, where $j\notin \{(i-1)\mathrm{~mod~}m, i, (i+1){\mathrm{~mod~}}m\}$, $0\leq i\leq m-1$.
	
	Then this hypergraph is called a hypercycle with $m$ hyperedges.
\end{definition}

\begin{definition}
	%2.4
	For a hypergraph $H(V, E)$, if there exists a hyperedge sequence $(E_0, E_1,\cdots, E_{m-1})$ in $E$ such that\\
	(1)$E_{i} \cap E_{i+1} \neq \emptyset, i=0,1, \cdots, m-2$;\\
	(2) For any $i$, $E_{i} \cap E_j=\emptyset$, where $i=0,1,\cdots,m-2$, $j\notin \{i-1, i, i+1\}$.
	
	Then this hyperedge is called a hyperpath with $m$ hyperedges.
\end{definition}

\subsection{Perfect secret sharing scheme}

\begin{definition}
	
	\cite{ref16} A scheme sharing a secret key $K$ for realizing access structure $\varGamma$ in participants set $P$  is called a PCSS if it satisfies the following two conditions:\\
	(Recoverability) for an authorized subset $A\in \varGamma$ of participants, if their shares are pooled together, the value of the secret key $K$ can be determined.\\
	(Secrecy) for an unauthorized subset $B\in \varGamma$ of participants, even if all their shares are collected, no information about the value of $K$ can be determined.
\end{definition}

\begin{definition}
	\cite{ref18} A PQSS for realizing access structure $\varGamma$ on participants $P$ is a quantum cryptographic protocol for securely distributing a secret if it satisfies the following two conditions:\\
	(Recoverability) any authorized set $A\in \varGamma$ can recover the secret i.e. there exists some recovery operation which can decode the secret from the shares in $A$.\\
	(Secrecy) any unauthorized set $B\in \varGamma$ has no information about the secret.
\end{definition}

\begin{definition}
	\cite{ref17}Assume that there is a well-established PCSS scheme $\sum $ that implements a minimal access structure $\varGamma_0$. The information rate of this scheme is defined as
	\begin{eqnarray*}
		\rho(\sum)=\frac{H(K)}{\max _{P_{i} \in P}\left\{H\left(P_{i}\right)\right\}}
	\end{eqnarray*}
	
	\noindent{where $K$ denotes the main key set, and
		$H(K)$ denotes the Shannon's entropy of the probability distribution of the set of shares $(p_s)_{s\in K}$, i.e., $H(K)=-\sum_{s\in K}p_s\mathrm{lb}p_s$.
		
		When the key space and share space are assigned with equal probability, the information rate of this scheme is $\rho=\text{lb}|K|/\max\limits_{P_i\in P}\{\text{lb}|S(P_i)|\}$.
		Here $S(P_i)$ denotes the set where all possible shares of participant $P_i$ are located, $H(P_i)$ denotes the entropy of the probability distribution of the set of shares $S(P_i)$,  where $P_i \in P, i\in [n]$.
	}
\end{definition}

For a given access structure, the upper bound of the information rate in all achievable secret sharing schemes is called the OIR, which was denoted as $\rho^*_C(\varGamma_0)$ \cite{ref19}. It can be shown that the OIR of PCSS scheme that implement an access structure does not exceed 1\cite{ref20}. An access structure with OIR 1 is called an ideal access structure\cite{ref21}.

\begin{lemma} \cite{ref22}
	Assume that $\varGamma$ is an access structure with minimal access structure $\varGamma_0$ .
	$l\geq 1$ is an integer. Let $K$ be a specified key set and $D_h=\{\varGamma_{h,1},\varGamma_{h,2},\cdots,\varGamma_{h,n_h}\}$ be an ideal decomposition of $\varGamma_0$ for the key set $K$, where $1\leq h\leq l$. Let $P_{h,j}$ be the set of participants of the access structure $\varGamma_{h,j}$. For each participant $P_i$, define
	
	\begin{eqnarray*}
		R=\sum_{i=1}^{l}\left|\left\{j: P_{i} \in P_{h, j}\right\}\right|
	\end{eqnarray*}
	
	Then there exists a perfect secret sharing scheme that implements $\varGamma$ with an information rate of $\rho=\frac{l}{R}$, where $R=Max\{R_i:1\leq i \leq n\}$.
\end{lemma}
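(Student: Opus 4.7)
The plan is to construct the scheme explicitly and then verify three properties: the total secret has the claimed entropy, every authorised set can recover it, and no unauthorised set learns anything; the information-rate formula will then follow mechanically from the definition of $\rho$. The construction is the standard decomposition glue. I would sample $l$ mutually independent, uniformly distributed keys $K_1,\ldots,K_l$ from $K$ and declare the overall secret to be the tuple $(K_1,\ldots,K_l)$, which has entropy $l\log|K|$. For each $h\in\{1,\ldots,l\}$ and each component $\varGamma_{h,j}\in D_h$, I would invoke the ideal perfect secret sharing scheme that realises $\varGamma_{h,j}$ on the participant set $P_{h,j}$ to share $K_h$ using fresh, independent randomness; by ideality every share produced has entropy exactly $\log|K|$. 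The share that $P_i$ ultimately holds is the concatenation of all the per-component shares assigned to it, numbering exactly $R_i := \sum_{h=1}^{l}|\{j:P_i\in P_{h,j}\}|$ pieces, so $H(P_i)\le R_i\log|K|$ and hence $\max_i H(P_i)\le R\log|K|$ with $R=\max_i R_i$.

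For recoverability, fix any $A\in\varGamma$. By the defining property of a decomposition, for every $h$ the family $D_h$ covers $\varGamma_0$, so $A$ contains a minimal authorised set of some $\varGamma_{h,j}$; pooling the per-component shares the ideal scheme for $\varGamma_{h,j}$ assigned to $A$ recovers $K_h$. Doing this for each $h$ reconstructs the full tuple $(K_1,\ldots,K_l)$. For secrecy, each component $\varGamma_{h,j}$ is a substructure of $\varGamma$, so any $B\notin\varGamma$ is also unauthorised in every $\varGamma_{h,j}$. By perfectness of each component scheme the shares of $B$ inside that scheme reveal nothing about $K_h$; because the $K_h$ and the internal randomness used in the various component schemes were drawn independently, the joint distribution of all of $B$'s shares is a product of marginals each independent of the corresponding $K_h$, so $B$ learns nothing about $(K_1,\ldots,K_l)$. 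Combining yields $\rho = l\log|K|/(R\log|K|) = l/R$.

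The step I expect to be the main obstacle is the secrecy argument: the $l$ sub-schemes share participants, so one must rule out sneaky cross-scheme correlations. The cleanest way to handle this is to first argue, for a fixed $h$, that conditioning on $K_h$ the distribution of $B$'s shares produced inside $D_h$ is independent of $K_h$ (by perfectness of each ideal component together with the fact that $B$ is unauthorised in every $\varGamma_{h,j}$), and then to lift this to full joint independence by using the mutual independence of $K_1,\ldots,K_l$ and of the internal coins across the $\sum_h n_h$ component schemes. Once this independence is established, the information-rate calculation is purely bookkeeping on entropies.
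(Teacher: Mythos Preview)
The paper does not prove this lemma; it is quoted verbatim from Stinson's decomposition-construction paper \cite{ref48} and used as a black box. Your proposal is precisely the standard Stinson argument: share $l$ independent copies of the key, one through each layer $D_h$, using the ideal component schemes with fresh randomness; recover via the covering property of each $D_h$; argue secrecy by combining per-component perfectness with the independence of the coins and of the $K_h$. This is correct and is exactly the proof one finds in \cite{ref48}.

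One small remark on the point you flagged as the obstacle. Within a single layer $h$, an unauthorised $B$ may receive shares from several components $\varGamma_{h,j}$ that all encode the \emph{same} $K_h$, so ``per-component perfectness plus independent $K_h$'' is not quite enough on its own; you also need that, conditioned on $K_h=k$, the share vectors coming from different $j$ in layer $h$ are independent (fresh coins) and each has a law not depending on $k$ (perfectness), so their joint law is independent of $k$. You essentially say this in your last paragraph, but make sure the write-up treats the within-layer step before invoking independence across layers.
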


\begin{lemma}\cite{ref11}
	Let $H(P,\varGamma_0)$ be a hypercycles containing three hyperedges, and let $B_1, B_2, B_3$ be its 2-regions. Then for any $P_i\in B_r, P_j\in B_s$, $H(P_i)+H(P_j) \geq 3H(S)$, where $r,s=1,2,3$ .
\end{lemma}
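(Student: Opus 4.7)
The plan is to prove the bound by the standard information-theoretic technique used in \cite{ref29} for lower-bounding share sizes in perfect secret sharing schemes. The underlying framework is the characterization $H(S\mid A)=0$ for every authorized $A$ and $H(S\mid B)=H(S)$ for every unauthorized $B$, from which one obtains the one-step lemma: if $B$ is unauthorized and $B\cup C$ is authorized, then $I(S;C\mid B)=H(S)$ and therefore $H(C\mid B)\ge H(S)$. Since $H(P_i)+H(P_j)\ge H(P_i\mid X)+H(P_j\mid Y)$ for any conditioning sets $X,Y$, the strategy is to find conditioning sets that each extract an independent copy of $H(S)$ from one of the three hyperedges.

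First I would write the three hyperedges as $E_0,E_1,E_2$, so that $B_1=(E_0\cap E_1)\setminus E_2$, $B_2=(E_1\cap E_2)\setminus E_0$, and $B_3=(E_0\cap E_2)\setminus E_1$, and by the cyclic symmetry of the three hyperedges together with the symmetry of $\{r,s\}$ reduce to two representative cases: $r=s=1$ and $(r,s)=(1,2)$. In the cross-region case, $P_i\in B_1$ and $P_j\in B_2$ both lie in the middle hyperedge $E_1$, while $P_i$ is essential to $E_0$ and $P_j$ is essential to $E_2$. I would introduce the three auxiliary sets $T_0=E_0\setminus\{P_i\}$, $T_1=E_1\setminus\{P_i,P_j\}$, $T_2=E_2\setminus\{P_j\}$, each unauthorized by minimality of the corresponding hyperedge, and note that the three authorized augmentations $T_0\cup\{P_i\}=E_0$, $T_1\cup\{P_i,P_j\}=E_1$, $T_2\cup\{P_j\}=E_2$ recover the three hyperedges. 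Applying the one-step lemma along a chain of conditionings built from $T_0,T_1,T_2$ and assembling the terms by submodularity of Shannon entropy then yields three independent increments of $H(S)$, one per hyperedge, summing to $H(P_i)+H(P_j)\ge 3H(S)$. The same-region case $r=s$ is handled by a symmetric construction in which the two hyperedges incident to the common two-region take over the roles of $E_0$ and $E_2$.

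The main obstacle will be preventing the three $H(S)$ contributions from collapsing into only $2H(S)$; once the conditioning set becomes authorized, any further participant contributes zero mutual information with $S$, so a careless three-fold application of the one-step lemma only delivers two independent increments. The cure is to keep each intermediate conditioning set strictly unauthorized throughout the chain, for which the minimality of every hyperedge (so that $E_k\setminus\{v\}$ remains unauthorized for every $v\in E_k$) and the existence of one-regions in the hyperedges not incident to both $P_i$ and $P_j$ are both essential. Carrying out this book-keeping exhaustively over the sub-cases determined by which two-regions and which one-regions happen to be non-empty in a given three-edge hypercycle is precisely what is done in \cite{ref29}, to which the lemma is attributed.
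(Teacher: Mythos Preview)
The paper does not supply a proof of this lemma; it is simply stated in the Background section with the citation \cite{ref29} and then invoked in Lemma~III.1. Your proposal therefore cannot be compared against a proof in the paper, because there is none---both you and the authors defer to \cite{ref29}.

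That said, your sketch is the correct one and matches the standard entropy-counting argument underlying the result in \cite{ref29}: exploit $H(S\mid A)=0$ for authorized $A$ and $H(S\mid B)=H(S)$ for unauthorized $B$, extract one increment of $H(S)$ per hyperedge via the one-step lemma, and use submodularity to combine them without double counting. Your identification of the main obstacle---that a careless chain collapses to $2H(S)$ once the conditioning set becomes authorized---and the cure of keeping every intermediate conditioning set strictly unauthorized (using minimality of the hyperedges and the presence of 1-regions) is exactly right. One small remark: in Lemma~III.1 the authors actually only use the inequality to conclude $\max_{P_i}H(P_i)\ge \tfrac{3}{2}H(S)$, for which it suffices that the bound hold for \emph{some} pair $P_i\in B_r$, $P_j\in B_s$ with $r\ne s$; so the same-region sub-case $r=s$ you mention, while stated in the lemma, is not needed downstream in this paper.
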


\subsection{Simmons 's Geometry Construction Method}
The Simmons geometric construction \cite{ref23} for secret sharing scheme can be thought of as a generalization of the Blakley threshold scheme \cite{ref24}. \cite{ref25} improved Simmons' geometric construction \cite{ref23}, resulting in a modified scheme with higher information rate. we describe the scheme as follows \cite{ref25}.

The vectors in $F^{m}_p$ are called points; the cosets of $F^{m}_p$ relative to any 1-dimensional vector subspace, lines; the cosets of $F^{m}_p$ relative to any 2-dimensional vector subspace, planes; the cosets of $F^{m}_p$ relative to any $m-1$ dimensional vector subspace, hyperplanes. More generally, the cosets of $F^{m}_p$ relative to any $r$ -dimensional vector subspace $(0\leq r\leq m)$ are called  $r$-flats. In particular, 0-flats are points, 1-flats are lines, 2-flats are planes, and $(m-1)$-flats are hyperplanes.  An $r$ -flats is said to be incident with an $s$ -flats if the $r$ -flats contains or is contained in the $s$ -flats. Then the point set $F^{m}_p$ with the $r$-flats $(0\leq r\leq m)$ and the incidence relation among them defined above is called the $m$ -dimensional affine space over $F_p$, and denoted by $AG(m, F_p)$.

Let $\alpha_1=(\alpha_{11},\alpha_{12}$,$\cdots$,$\alpha_{1m})$ and $\alpha_2=(\alpha_{21}$,$\alpha_{22}$,$\cdots$,$\alpha_{2m})$ be two distinct points in $AG(m, F_p)$, the line $l$ passing through $\alpha_1$ and $\alpha_2$ is the set of points in the cosets $<\alpha_1-\alpha_2>+\alpha_2$, where $<\alpha_1-\alpha_2>$ is the 1-dimensional vector subspace by the non-zero vector $\alpha_1-\alpha_2$. Hence the line passing through $\alpha_1$ and $\alpha_2$ is the point set
\begin{eqnarray*}
	\left\{\mu \!\left(a_{11}-a_{21},\! a_{12}-a_{22},\! \cdots,\! a_{1m}-a_{2m}\right)\!+\!\left(\!a_{21},\! a_{22},\! \cdots,\! a_{2m}\right)\!\right\},
\end{eqnarray*}
or
\begin{eqnarray*}
	\left\{\mu_{1}\!\left(a_{11},\! a_{12},\! \cdots,\! a_{1m}\right)\!+\!\mu_{2}\left(a_{21},\! a_{22},\! \cdots,\! a_{2 m}\right)\! \mid \mu_{1}\!+\!\mu_{2}\!=\!1\!\right\}.
\end{eqnarray*}
\noindent{where} $\mu\in[0, 1]$.

More generally, the $r$ -flat containing $r+1$ points $\alpha_1, \alpha_2, \cdots, \alpha_{r+1}$, which do not contain in any $(r-1)$-flat , has the parametric representation
\begin{equation}
	\left\{\mu_{1} \alpha_{1}+\mu_{2} \alpha_{2}+\cdots+\mu_{r+1} \alpha_{r+1} \mid \mu_{1}+\mu_{2}+\cdots+\mu_{r+1}=\!1\!\right\}
\end{equation}

For convenience, we'll write this $r$-flat in (1) as \emph{Span} $(\alpha_1, \alpha_2, \cdots, \alpha_{r+1})$.
Suppose that $V_D$ is a fixed line in $AG(m, F_p)$ and Let $V_I$ be a hyperplane such that $|V_I\bigcap V_D|=1$; the key will be the unique point $V_I\bigcap V_D$. Every participant $P_i$ will be given a share consisting of a set $d(P_i)$ of $R_i$ points in $V_I$, and the set $d(P_i)$ chosen in such a way that, for every set $B$, we have
\begin{eqnarray*}
	\operatorname{\emph{Span}}\left\{\sum_{\left\{i: P_{i} \in B\right\}} d\left(P_{i}\right)\right\} \cap V_{D}=\emptyset \Leftrightarrow B \notin \varGamma.
\end{eqnarray*}

In other words, if the participants in the authorization set receive enough information vectors from $V_I$ such that these vectors intersect with those of $V_D$ and just happen to get the key $V_I\bigcap V_D$.

Since both the shares and secret key described in \cite{ref25} are on the publicly known line segment $V_D$, \cite{ref23} simplifies the points in $AG(m, F_p)$ that serve as shares into parameters by extracting the ideas from \cite{ref25}. The specific description is given below. Assume that all $K_{i,j}$ and $K_0$ are public, where $K_{i,j}\in V_I$, and $K_0\in V_I$. Also, let $L_{i,j}$ be a line that contains $K_{i,j}$, be parallel to $V_D$, and $\varepsilon$ be the direction vector of $L_{i,j}$ and $V_D$. Let

\begin{eqnarray*}
	\begin{aligned}
		& V_D = \{K_0 + \lambda \varepsilon \mid \lambda \in F_p\};\\
		& L_{i,j} = \{K_{i,j} + \lambda_{i,j} \varepsilon \mid \lambda \in F_p, i \in [n], j \in [R_i]\}.
	\end{aligned}
\end{eqnarray*}

Let $A$ be a $1\times n$ matrix over $F_p$ that satisfies $A\cdot \varepsilon=1$, and let $\Pi$ be the solution space of $AX=k$, where $k\in F_p$ is the secret key (when different values are taken, the corresponding $\Pi$ is also different). After the straight lines $V_D$ and $L_{i,j}$ intersect the plane $\Pi$, respectively, we can obtain
\begin{equation}
	K = K_0 + \lambda \varepsilon,
\end{equation}
\begin{equation}
	L_{i,j} \cap \Pi = K_{i,j} + \lambda_{i,j} \varepsilon.
\end{equation}

By left-multiplying both sides of (2) and (3) by matrix $A$, respectively, we have
\begin{equation}
	\lambda = k - A K_0,
\end{equation}
\begin{equation}
	\lambda_{i,j} = k - A K_{i,j}.
\end{equation}
for all $i,j$.

Since there is a one-to-one correspondence between $\lambda$ and the point on $V_D$, and a one-to-one correspondence between $\lambda_{i,j}$ and the point on $L_{i,j}$, we can take $\lambda_{i,j}$ as the participant $P_i$'s shares, and take the domain element $\lambda$ as the secret in this scenario. In this way, we only use numbers (instead of $m$-dimensional vectors) as shares and secrets. The distribution rules after such modification are as follows:
\begin{equation}
	\begin{aligned}
		f_{\Pi}(D)=k-AK_{0},\\
	\end{aligned}
\end{equation}
\begin{equation}
	f_{\Pi}(P_{i})=\{k-AK_{i,j}|1\leq j\leq R_{i}\},i\in[n].
\end{equation}

An authorized subset $B$ can calculate this secret key $\lambda$ as a function of $\lambda_{ij}$ as follows:

First, since
\begin{eqnarray*}
	K_0\in span(\sum_{\{i:P_i\in B\}}\bigcup_{j=1}^{R_i}K_{i,j}),
\end{eqnarray*}
\noindent{we} can represent $K_0$ as follows,
\begin{eqnarray*}
	K_0=\sum_{\{i:P_i \in B\}}\sum_{j=1}^{R_i}\mu_{i,j}K_{i,j},
\end{eqnarray*}
\noindent{where} $\sum_{\{i:P_i\in B\}}\sum_{j=1}^{R_i}\mu_{i,j}=1$.

To make the idea of this transformation easier to understand, we will provide an explanation with examples below.

Let $\varGamma^{(3)}_0=\{134, 12, 23\}$.
We set $p=11$, $m=3$. Let $V_I$ be a plane meeting $V_D$ in a point $K_0=(0,0,0)$. And let $\varepsilon=(0,0,1)$, $V_I$ is the plane $z=0$. $K_{1,1}, K_{2,1}, K_{2,2}, K_{3,1}$ and $K_{4,1}$ are the points in $V_I$, where $K_{1,1}=(-1,0,0)$, $K_{2,1}=(0, 1, 0)$, $K_{2,2}=(1, 1, 0)$, $K_{3,1}=(1, -1, 0)$, $K_{4,1}=(1, 1, 0)$. A plane ${\Pi}$ that meets $V_D$
in a point has the equation $ax+ by+z=k$, where $k$ is the secret and $a,b$ are random numbers. Alice distributes the shares to the four participants as follows:

\begin{equation}
	\begin{aligned}
		&P_1\text{~receives~}(k+a);\\
		&P_2\text{~receives~}(k-b,k-a-b);\\
		&P_3\text{~receives~}(k-a+b);\\
		&P_4\text{~receives~}(k-a-b).\\
	\end{aligned}
\end{equation}

In this way, we can obtain
\begin{eqnarray*}
	\begin{aligned}
		& \lambda=k-AK_{0} \\
		& =k-A\sum_{\{i:P_{i}\in B\}}\sum_{j=1}^{R_{i}}\mu_{i,j}K_{i,j} \\
		& =k-\sum_{\{i:P_{i}\in B\}}\sum_{j=1}^{R_{i}}\mu_{i,j}AK_{i,j} \\
		& =k-\sum_{\{i:P_i\in B\}}\sum_{j=1}^{R_i}\mu_{i,j}\left(k-\lambda_{i,j}\right) \\
		& =\!\sum_{\{i:P_i\in B\}}\sum_{j=1}^{R_i}\mu_{i,j}\lambda_{i,j}.
	\end{aligned}
\end{eqnarray*}

Next we use Simmon's geometric construction method and decomposition construction method \cite{ref22} to get the optimal information rate of the access structure $\varGamma^{(2)}_0$.

\textbf{Example 1} For $\varGamma^{(2)}_0=\{1234, 1267, 456\}$, seeing Fig.1, we give an efficient PCSS schemes on  $P=\{P_{[6]}\}$ , here $p=11$. Let
\begin{eqnarray*}
	\begin{aligned}
		& V_{I}=\{(x_{1},x_{2},x_{3},x_{4},x_{5},x_{6})|x_{6}=0\}, \\
		& V_{D}=\{(x_{1},x_{2},x_{3},x_{4},x_{4},x_{5},x_{6})|x_{i}=0,i\in[5]\}, \\
		& \Pi=\{(x_{1},x_{2},x_{3},x_{4},x_{5},x_{6})|A(x_{1},x_{2},x_{3},x_{4},x_{5},x_{6})^{T}=k\},
	\end{aligned}
\end{eqnarray*}

\noindent{where} $A=(a,b,c,d,e,1)$, and $a,b,c,d,e\in F_{11}$ are random number, and ${\Pi}$ intersects with $V_D$ at $k$. Alice selects $9$ points $K_{1,1}, K_{2,1}, K_{2,2}, K_{3,1}, K_{4,1}, K_{5,1}, K_{6,1}, K_{7,1}, K_{7,2}$ in $V_I$ such that any three points among the following 10 points are not collinear,
\begin{eqnarray*}
	K_{1,1}, K_{2,1}, K_{2,2}, K_{3,1}, K_{4,1}, K_{5,1}, K_{6,1}, K_{7,1}, K_{7,2},K_0.
\end{eqnarray*}
\noindent{Here:}
\begin{eqnarray*}
	\begin{aligned}
		&K_{1,1}=(1,2,2,4,5,0)^{T}, K_{2,1}=(2,1,2,2,4,0)^{T}, \\
		&K_{2,2}=(3,0,1,4,3,0)^{T}, K_{3,1}=(4,-1,0,1,1,0)^{T}, \\
		&K_{4,1}=(5,5,-1,0,1,0)^T, K_{5,1}=(5,9,10,5,8,0)^{T}, \\
		&K_{6,1}=(7,4,-3,7,-1,0)^{T}, K_{7,1}=(8,-5,-4,-3,0,0)^{T} \\
		&K_{7,2}=(9,-6,-5,-4,-1,0)^{T}, K_0=(0,0,0,0,0,0)^{T}.
	\end{aligned}
\end{eqnarray*}

Alice publics $K_{1,1}$, $K_{2,1}$, $K_{2,2}$, $K_{3,1}$, $K_{4,1}$, $K_{5,1}$, $K_{6,1}$, $K_{7,1}$,$ K_{7,2}$, $K_0$. Here
\begin{eqnarray*}
	\begin{aligned}
		& V_{D}=\{K_{0}+\lambda\varepsilon|\lambda\in F_{11}\}, \\
		& L_{i,j}=\{K_{i,j}+\lambda_{i,j}\varepsilon|\lambda_{i,j}\in F_{11}\},\mathrm{~where~}i\in[7],j\in[2].
	\end{aligned}
\end{eqnarray*}
\noindent{and} $\varepsilon=(0,0,0,0,0,1)$. Alice calculates
\begin{eqnarray*}
	\lambda_{i,j}=k-AK_{i,j}(i\in [7], j\in [2]),
\end{eqnarray*}

\noindent{and} pass $\lambda_{i,j}$  to $P_i$ by secure channel, where
\begin{eqnarray*}
	\begin{aligned}
		&\lambda_{1,1}=k-AK_{1,1}=k-a-2b-2c-4d-5e, \\
		&\lambda_{2,1}=k-AK_{2,1}=k-2a-b-2c-2d-4e, \\
		&\lambda_{2,2}=k-AK_{2,2}=k-3a-c-4d-3e, \\
		&\lambda_{3,1}=k-AK_{3,1}=k-4a+b-d-e, \\
		&\lambda_{4,1}=k-AK_{4,1}=k-5a-5b+c-e, \\
		&\lambda_{5,1}=k-AK_{5,1}=k-6a+3b-5c+d-5e, \\
		&\lambda_{6,1}=k-AK_{6,1}=k-7a-4b+3c-7d+e, \\
		&\lambda_{7,1}=k-AK_{7,1}=k-8a+5b+4c+3d, \\
		&\lambda_{7,2}=k-AK_{7,2}=k-9a+6b+5c+4d+e.
	\end{aligned}
\end{eqnarray*}

For the MAS $1234$ in $\varGamma^{(2)}_0$, participants $P_1, P_2, P_3, P_4$ solve the following system of equations

\begin{equation*}
\left\{
    \begin{aligned}
        & \mu_{1,1}K_{1,1} + \mu_{2,1}K_{2,1} + \mu_{2,2}K_{2,2} + \mu_{3,1}K_{3,1} + \mu_{4,1}K_{4,1} = K_{0}, \\
        & \mu_{1,1} + \mu_{2,1} + \mu_{2,2} + \mu_{3,1} + \mu_{4,1} = 1,
    \end{aligned}
\right.
\end{equation*}

%\begin{eqnarray*}
%	\begin{aligned}
%		& \mu_{1,1}K_{1,1}\!+\!\mu_{2,1}K_{2,1}\!+\!\mu_{2,2}K_{2,2}\!+\!\mu_{3,1}K_{3,1}\!+\!\mu_{4,1}K_{4,1}\!=\!K_{0},\\
%		& \mu_{1,1}+\mu_{2,1}+\mu_{2,2}+\mu_{3,1}+\mu_{4,1}=1,
%	\end{aligned}
%\end{eqnarray*}

\noindent{then} obtain $(\mu_{1,1}, \mu_{2,1}, \mu_{2,2}, \mu_{3,1}, \mu_{4,1})=(4,8,7,6,9)$.

For the MAS $1267$ and $456$,  they obtain the following solutions, respectively
\begin{eqnarray*}
	\begin{aligned}
		& (\mu_{1,1},\mu_{2,1},\mu_{2,2},\mu_{6,1},\mu_{7,1},\mu_{7,2})=(4,4,6,1,10,9),\\
		& (\mu_{4,1},\mu_{5,1},\mu_{6,1})=(3,1,8).
	\end{aligned}
\end{eqnarray*}

For the MAS $1234$, participant $P_1, P_2, P_3, P_4$ can obtain the secret
\begin{eqnarray}
	\begin{aligned}
		\lambda &=4  (k-a-2b-2c-4d-5e) \\
		& +8(k-2a-b-2c-2d-4e) \\
		& +7(k-3a-c-4d-3e) \\
		& +6(k-4a+b-d-e) \\
		& +9(k-5a-5b+c-e) \\
		& =k.
	\end{aligned}
\end{eqnarray}

Similarly, the participant in the MAS $1347$ and $2346$ can both obtain the secret key $k$,respectively.
It is easy to verify that the systems of equations in the above form established by any maximal non - authorized subsets in $\varGamma^{(2)}_0$ are all inconsistent. Therefore, they cannot obtain the secret $k$.

%==================================================================================================
\section{The hypercycle quantum access structure with three hyperedges}

This section mainly studies the hypercycle quantum access structures with three hyperedges and their optimal information rate.

Since the minimal access structure $\varGamma_0$ corresponds to a hypergraph $H(P, \varGamma_0)$, for the sake of convenience, the hypergraph $H(P, \varGamma_0)$ corresponding to the minimal access structure $\varGamma_0$ is also denoted as $\varGamma_0$.

\subsection{Types of hypercycle access structures with three hyper-edges}
If $\varGamma_0$ is a connected hypergraph containing three hyperedges $E_1, E_2, E_3$, then there exists $E_i(i\in [3])$ such that $E_{(i-1)\operatorname{~mod~}3}\cap E_i\neq\emptyset$ and $E_{i}\cap E_{(i+1)\mathrm{~mod~}3}\neq\emptyset$.

\begin{theorem}
	%3.1
	Let $\varGamma_0$ be a hypercycle containing $n(n\geq3)$ hyperedges. $\varGamma_0$ is a quantum access structure if and only if $n=3$. And there are 12 types of hypercycles in this case.
\end{theorem}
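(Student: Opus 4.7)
My plan is to split the biconditional cleanly and then handle the counting as a symmetry-quotient enumeration. The key tool for both implications is the characterization (from Ref.~\cite{ref30}, quoted in the introduction) that a classical minimal access structure is a quantum access structure if and only if every two of its minimal authorized subsets have non-empty intersection. Under the hypergraph dictionary, this says the hypercycle is a quantum access structure iff $E_i\cap E_j\neq\emptyset$ for every pair of hyperedges.

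For the ``only if'' direction, suppose $n\geq 4$. Apply condition~(2) in Definition~2.3 with $i=0$ and $j=2$: since $n\geq 4$, we have $2\notin\{n-1,0,1\}=\{(i-1)\bmod n,i,(i+1)\bmod n\}$, so $E_0\cap E_2=\emptyset$. Thus two minimal authorized subsets are disjoint, contradicting the characterization; hence $\varGamma_0$ is not a quantum access structure. For $n=3$, condition~(2) is vacuous and condition~(1) gives $E_0\cap E_1,\,E_1\cap E_2,\,E_2\cap E_0$ all non-empty, so every pair intersects and $\varGamma_0$ is a quantum access structure.

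For the classification into $12$ types, I will partition the vertex set $V$ into the seven Venn regions relative to $E_1,E_2,E_3$: one $3$-region $R_{123}$, three $2$-regions $R_{12},R_{13},R_{23}$, and three $1$-regions $R_1,R_2,R_3$. Since hypergraph isomorphism is determined by the incidence pattern of vertices with hyperedges, the isomorphism type of a hypercycle with three hyperedges is determined by the $\{0,1\}$-indicator vector recording which of these seven regions are non-empty, modulo the $S_3$ action permuting the labels of $E_1,E_2,E_3$. I will then impose two sets of constraints: (i) \emph{hypercycle condition}, $R_{123}\cup R_{ij}\neq\emptyset$ for each of the three pairs $\{i,j\}$, coming from $E_i\cap E_j\neq\emptyset$; and (ii) \emph{minimality}, no hyperedge contains another, which reads $R_{ik}\cup R_i\neq\emptyset$ for each ordered pair $i\neq k$ (so if $R_i=\emptyset$ then both $2$-regions adjacent to $E_i$ must be non-empty).

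I will then enumerate by cases on $a:=[R_{123}\neq\emptyset]$ and on the number $t$ of non-empty $2$-regions, picking $S_3$-orbit representatives: when $a=1$, the subcases $t=0,1,2,3$ yield $1,1,2,4$ types respectively (the constraints on the $1$-regions are forced in the first three subcases, while for $t=3$ the four orbits of $(c_1,c_2,c_3)\in\{0,1\}^3$ under $S_3$ appear); when $a=0$, condition~(i) forces $t=3$ and again the $1$-region pattern contributes $4$ orbits. This gives $1+1+2+4+4=12$, matching the claim. The main obstacle is making the bookkeeping tight: keeping track of exactly which combinations of empty regions are allowed by minimality and carefully quotienting by the stabilizer of each $2$-region pattern (which is $S_3$ for $t=0,3$, the transposition fixing the non-trivial pair for $t=1,2$). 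I expect this to be routine once the case matrix is laid out, and I will present the resulting $12$ representatives explicitly so that the subsequent sections can refer to them by name.
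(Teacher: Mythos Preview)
Your biconditional argument matches the paper's exactly: both directions use the intersection criterion from \cite{ref30}, with the observation that condition~(2) of Definition~II.3 produces a disjoint pair $E_0,E_2$ once $n\geq4$, and is vacuous for $n=3$.

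For the classification into 12 types, your approach is genuinely more complete than the paper's. The paper simply \emph{asserts} the list $\varGamma_1,\dots,\varGamma_{12}$ with no derivation of exhaustiveness; your Venn-region analysis modulo the $S_3$ action actually proves it. Two small remarks on your bookkeeping: (a) the minimality condition $E_i\not\subseteq E_k$ reads $R_i\cup R_{ij}\neq\emptyset$ where $j$ is the \emph{third} index, not $R_i\cup R_{ik}$; your parenthetical (``if $R_i=\emptyset$ then both adjacent 2-regions are non-empty'') is the correct consequence, and your case counts $1,1,2,4,4$ are all right, so this is a harmless slip. (b) In the $a=1$, $t=2$ subcase the 1-regions are not all ``forced'': two of them are, but $R_1$ is free, which is exactly what gives you the $2$ types there. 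Your count already reflects this; just tighten the wording. It would also strengthen the write-up to note explicitly that the paper's ``types'' treat each non-empty region $A_i$ as an arbitrary non-empty block, so the isomorphism class really is determined by the $\{0,1\}$-indicator on the seven regions rather than by their cardinalities.
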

\begin{proof}
	
	\begin{figure*}[t]
		\centering
		\includegraphics[width=5in]{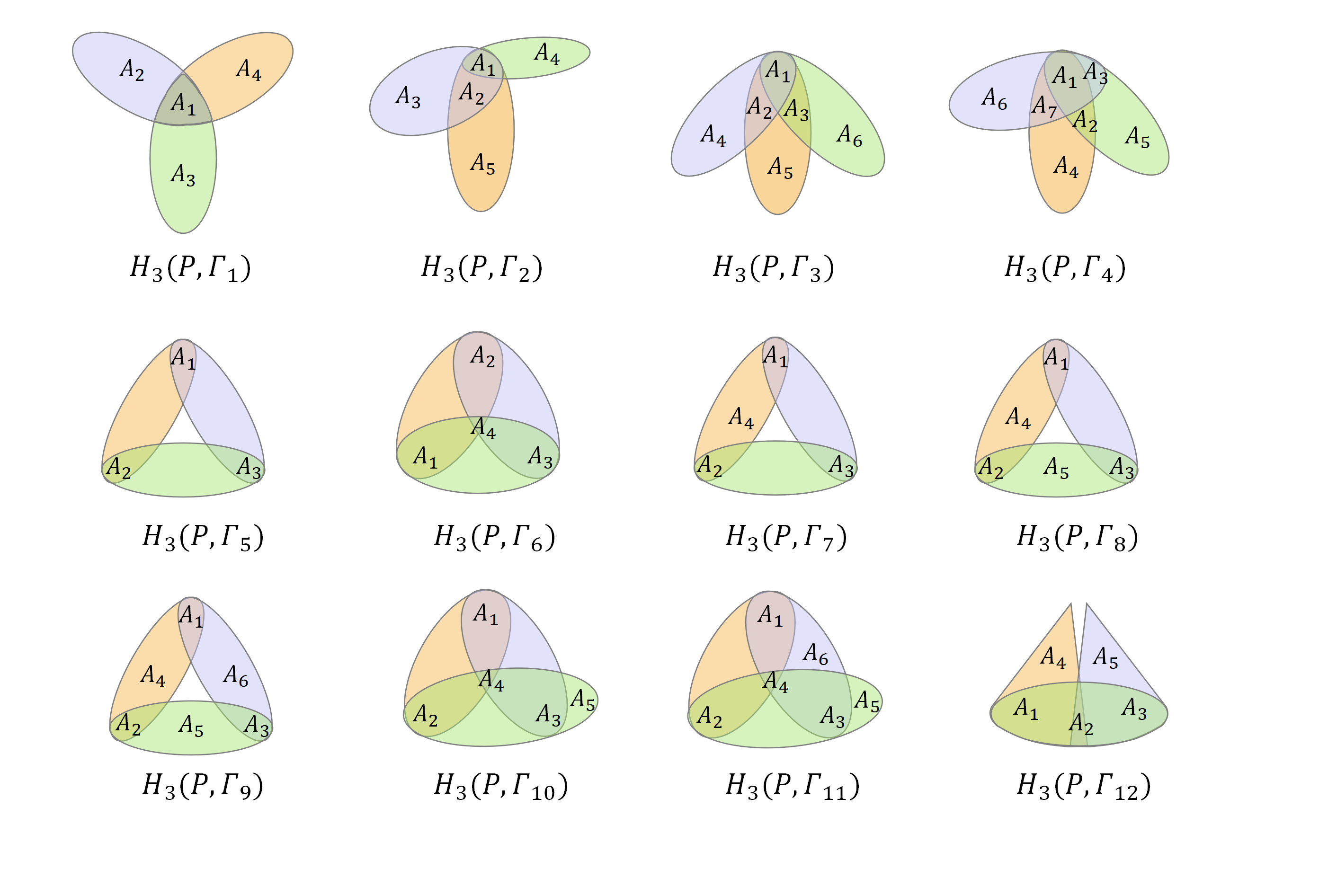}
		\caption{Hypercycle quantum access structure containing 3 hyperedges.}
		\label{fig 2}
	\end{figure*}
	
	\textbf{Necessity: }If $\varGamma_0$ is a quantum access structure, we will prove that $n>3$ is impossible. Assume that $n>3$,  according to the definition of a hypercycle, for $i\in [n]$, then $E_{i}\cap E_{(i+1)\mathrm{~mod~}3}\neq\emptyset$, and for $j\notin\{(i-1) \mathrm{~mod~}3,i,(i+1)\mathrm{~mod~}3\}$, we have $E_{i}\cap E_{j}=\emptyset$. This indicates that the complement of the authorized subset $E_i$ contains the authorized subset $E_j$, which contradicts the fact that $\varGamma_0$ is a quantum access structure.
	
	\textbf{Sufficiency:} If $\varGamma_0$ is a hypercycle containing three hyperedges, then by the definition of a hypercycle, for $i\in[n]$, we have $E_{i}\cap E_{(i+1)\mathrm{~mod~}3}\neq\emptyset$. Therefore, according to the definition of a quantum access structure, $\varGamma_0$ is a quantum access structure.
	
	If $\varGamma_0$ is a hypercycle quantum access structure, then there are a total of 12 types in the sense of isomorphism, and their algebraic representation are as follows:
	\begin{align*}
		\Gamma_{1}  &=\{A_{1}A_{2},A_{1}A_{3},A_{1}A_{4}\}, \\
		\Gamma_{2}  &=\{A_{1}A_{2}A_{4},A_{1}A_{2}A_{5},A_{1}A_{3}\}, \\
		\Gamma_{3}  &=\{A_{1}A_{2}A_{4},A_{1}A_{2}A_{3}A_{5},A_{1}A_{3}A_{6}\}, \\
		\Gamma_{4}  &=\{A_{1}A_{2}A_{4}A_{7},A_{1}A_{2}A_{3}A_{5},A_{1}A_{3}A_{6}A_{7}\}, \\
		\Gamma_{5}  &=\{A_{1}A_{2},A_{1}A_{3},A_{2}A_{3}\}, \\
		\Gamma_{6}  &=\{A_{1}A_{2}A_{4},A_{1}A_{3}A_{4},A_{2}A_{3}A_{4}\}, \\
	\end{align*}
	\begin{align*}
	\Gamma_{7}  &=\{A_{1}A_{2}A_{4},A_{1}A_{3},A_{2}A_{3}\}, \\
	\Gamma_{8}  &=\{A_{1}A_{2}A_{4},A_{1}A_{3},A_{2}A_{3}A_{5}\}, \\
	\Gamma_{9}  &=\{A_{1}A_{2}A_{4},A_{1}A_{3}A_{6},A_{2}A_{3}A_{5}\}, \\
	\Gamma_{10} &=\{A_{1}A_{2}A_{4},A_{1}A_{3}A_{4},A_{2}A_{3}A_{4}A_{5}\}, \\
	\Gamma_{11} &=\{A_{1}A_{2}A_{4},A_{1}A_{3}A_{4}A_{6}, A_{2}A_{3}A_{4}A_{5}\}, \\
	\Gamma_{12} &=\{A_{1}A_{2}A_{3},A_{1}A_{2}A_{4},A_{2}A_{3}A_{5}\}.
	\end{align*}
\end{proof}

\begin{theorem}
	%Theorem 3.2
	Let $\varGamma_0$ be a hyperstar containing three hyperedges, then it must be a hypercycle.
\end{theorem}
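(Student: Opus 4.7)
The plan is to unpack the definitions and observe that for $m=3$ the hypercycle conditions are essentially forced by the hyperstar condition that all three hyperedges share a common vertex. The argument will be short, so the "main obstacle" is really just being careful with the indexing in Definition 2.3.

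First, I would let $\varGamma_0$ be a hyperstar with hyperedges $E_1, E_2, E_3$. By condition (2) of the hyperstar definition, $\bigcap_{i=1}^{3} E_i \neq \emptyset$, so there exists a vertex $v$ lying in all three hyperedges. This immediately gives pairwise intersections: $E_i \cap E_j \supseteq \{v\} \neq \emptyset$ for every $i \neq j$. In particular, ordering the hyperedges as the sequence $(E_0, E_1, E_2) := (E_1, E_2, E_3)$, the first condition of Definition 2.3, namely $E_i \cap E_{(i+1)\,\mathrm{mod}\,3} \neq \emptyset$ for $i = 0, 1, 2$, holds.

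Next I would verify condition (2) of Definition 2.3, which requires $E_i \cap E_j = \emptyset$ whenever $j \notin \{(i-1)\bmod 3,\, i,\, (i+1)\bmod 3\}$. The key observation is that for $m = 3$ this condition is vacuous: for every $i \in \{0,1,2\}$, the set $\{(i-1)\bmod 3,\, i,\, (i+1)\bmod 3\}$ already equals $\{0,1,2\}$, so no index $j$ lies outside it and there is nothing to check.

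Combining the two steps, the ordered triple $(E_1, E_2, E_3)$ satisfies both conditions of Definition 2.3 with $m = 3$, so $\varGamma_0$ is a hypercycle, completing the proof. The only place where care is needed is the indexing check in the second step, which could easily be mishandled if one forgets that the modular window covers all three indices when $m = 3$; once this is made explicit, the statement follows with essentially no computation.
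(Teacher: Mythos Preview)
Your proof is correct and follows essentially the same approach as the paper: both arguments use the common vertex guaranteed by the hyperstar condition to obtain all pairwise intersections, and then invoke Definition~II.3 with $m=3$. Your presentation is in fact slightly cleaner, since you make explicit that the second hypercycle condition is vacuous when $m=3$, whereas the paper reduces ``without loss of generality'' to the particular form $\{A_1A_2,\,A_1A_3,\,A_1A_4\}$ and leaves the vacuity of the second condition implicit.
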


\begin{proof}
	Without loss of generality, we assume that the access structure corresponding to $\varGamma_0$ is $\varGamma_0=\{A_1 A_2, A_1 A_3, A_1 A_4 \}$. Let $E_1=\{A_1A_2\}$, $E_2=\{A_1A_3\}$, $E_3=\{A_1A_4\}$, then there exists a hyperedge sequence $(E_1, E_2, E_3)$ in this hyperstar such that for $i\in [3]$, $E_{i}\cap E_{(i+1)\mathrm{~mod~}3}\neq\emptyset$, and for $j\notin\{(i-1)\mathrm{~mod~}3,i,(i+1)\mathrm{~mod~}3\}$, $E_{i}\cap E_{j}=\emptyset$. According to Definition 2.3, $\varGamma_0$ is a hypercycle.
\end{proof}

The hypergraph representations of the above 12 hypercycle quantum access structure are as follows, as shown in Fig.2, where $H_3(P,\varGamma_i)(i\in[4])$ is a hyperstar..

\subsection{The optimal information rate of hypercycle structures with three hyperedges}
According to Theorems 3.1 and 3.2, the hyperstar quantum access structure containing three hyperedges must be a hypercycle quantum access structure. [13] has given the explicit construction of the PCSS schemes on these hyperstar access structures.
So this subsection constructs PCSS schemes with OPR, and gives these explicit construction for the hypercycle access structures $H_3(P,\varGamma_i)(i\in [5, 12])$ , as shown in Fig.2.

For $P=\bigcup^{m}_{i=1}A_i$,  where $A_i$ and $A_j$ are pairwise dispoint when $i\neq j$, $m\in[4, 7]$. For convenience, let $A_i=\{P^{(i)}_1, P^{(i)}_2, \cdots,P^{(i)}_{m_i}, \}, i\in[m]$.

\begin{theorem}
	%Theorem 3.3
	If the hypercycle containing three hyperedges is the quantum access structure $\varGamma_i$, then $\rho^*_C(\varGamma_i)=1$, where $i\in[5,6]$.
\end{theorem}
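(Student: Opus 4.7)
The plan is to establish both inequalities $\rho^*(\varGamma_i)\le 1$ and $\rho^*(\varGamma_i)\ge 1$ for $i=5,6$. The upper bound is immediate, since every perfect secret sharing scheme satisfies $\rho\le 1$, as noted after Definition II.8. For the lower bound I would construct, for each of $\varGamma_5$ and $\varGamma_6$, an explicit ideal perfect scheme of information rate $1$ by combining Shamir's threshold scheme at the level of the classes $A_1,\dots,A_m$ with additive (sum) secret sharing inside each class $A_i$; this fits the decomposition framework of Lemma II.1 with $l=R=1$.

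For $\varGamma_5=\{A_1A_2,A_1A_3,A_2A_3\}$ the class-level structure is a $(2,3)$-threshold on $\{A_1,A_2,A_3\}$. Over a finite field $F_p$ with $p\ge 5$, pick a random linear polynomial $f(x)=s+ax$ whose constant term is the secret $s$, fix three distinct nonzero evaluation points $x_1,x_2,x_3\in F_p$, and treat $f(x_i)$ as the share of the class $A_i$. Then share $f(x_i)$ additively among $A_i=\{P_1^{(i)},\dots,P_{m_i}^{(i)}\}$: draw $m_i-1$ uniformly random elements $s_i^{(1)},\dots,s_i^{(m_i-1)}\in F_p$, set $s_i^{(m_i)}=f(x_i)-\sum_{j<m_i}s_i^{(j)}$, and give $s_i^{(j)}$ to $P_j^{(i)}$. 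Every participant holds a single element of $F_p$, so the information rate equals $\text{lb}|F_p|/\text{lb}|F_p|=1$. Correctness: any MAS $A_i\cup A_j$ reconstructs $f(x_i)$ and $f(x_j)$ by summing within each class and then interpolates $s=f(0)$. Secrecy: any unauthorized subset either misses at least one member of each of the classes it touches (so additive secrecy hides every $f(x_i)$), or it contains at most one full class (so the $(2,3)$-threshold secrecy of Shamir hides $s$).

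For $\varGamma_6=\{A_1A_2A_4,A_1A_3A_4,A_2A_3A_4\}$ I would mask the secret: pick a uniformly random $r\in F_p$, additively share $r$ among $A_4$ exactly as above, and then run the $\varGamma_5$ scheme of the previous paragraph with secret $s-r$ on the classes $A_1,A_2,A_3$. Every participant again holds a single field element, giving information rate $1$. Correctness is immediate since every MAS contains all of $A_4$ (recovering $r$) together with two full classes among $A_1,A_2,A_3$ (recovering $s-r$), hence $s$. The main obstacle, and the part that requires care, is the secrecy case analysis: if an unauthorized $B$ omits a member of $A_4$, then $r$ is uniform from $B$'s viewpoint and perfectly masks $s$; if $B$ contains $A_4$ in full but fails to contain any two full classes of $\{A_1,A_2,A_3\}$, then its induced view on the sub-scheme is an unauthorized view of a $(2,3)$-threshold on $s-r$, so it is independent of $s-r$ and, being independent of $r$ as well, independent of $s$. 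Combining the upper bound with these two constructions yields $\rho^*(\varGamma_i)=1$ for $i\in\{5,6\}$.
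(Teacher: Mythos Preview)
Your proposal is correct and follows essentially the same route as the paper: a $(2,3)$-Shamir threshold at the level of the classes $A_1,A_2,A_3$ combined with additive $(m_i,m_i)$ sharing inside each $A_i$ for $\varGamma_5$, and for $\varGamma_6$ an additive split $s=s_1+s_2$ (your $r$ and $s-r$) with $s_1$ shared additively over $A_4$ and $s_2$ handled via the $\varGamma_5$ scheme on $A_1,A_2,A_3$. The only cosmetic difference is that the paper does not spell out the secrecy case analysis you give, merely asserting that unauthorized subsets obtain no information.
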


\begin{proof}
	First, an ideal secret sharing scheme is constructed on the access structure $\varGamma_5=\{A_1A_2, A_1A_3, A_2A_3\}$.
	
    Alice selects a secret key $s\in F_p$.

	\textbf{(I)} Shares distribution phase
	
	Alice applies the $(2, 3)$-threshold scheme to distribute a sub-shares to each subset $A_1, A_2, A_3$ respectively. The specific operations are as follows:
	
	Alice selects $x_i$ in $F_p$, uses it as the identities of participants in subsets $A_i,i\in{[3]}$, and public it, where $x_{[3]}$ are different from each other. Alice secretly selects an element $a$ in $F_p$, sets $f(x) = s + ax$, and calculates $y_i=f(x_i), i\in{[3]}$. Further, Alice secretly selects $m_1-1$ elements in $F_p$, denoted as $y^{(1)}_1, y^{(1)}_{2},\cdots, y^{(1)}_{m_1-1}$, and calculates $y^{(1)}_{m_1}=(y_1-\sum^{m_1-1}_{i=1}y^{(1)}_i)\mathrm{~mod~}p$. Then, Alice distributes the value of shares $y^{(1)}_i$ to $P^{(1)}_i, i\in[m_1]$.
	
	Similarly, Alice distributes the value of shares $y^{(l)}_j$ to $P^{(l)}_j$, $j\in m_l$, $l\in\{2,3\}$.
	
	\textbf{(II)} Reconstruct the subkey $y_i$.
	The participants of set $A_i$ apply the $(m_i, m_i)$-threshold scheme share subkey $y_i,i\in[3]$. Accumulate the shares of all participants in $A_1$ to refactor the subkey $y_l$, i.e., $y_{1}=\sum_{i=1}^{m_{1}}y_{i}^{(1)}\mathrm{~mod~}p$, $l\in[2,3]$.
	
	\textbf{(III)} Reconstruct the secret $s$.
	
	First, the participants in the MAS $A_1$ and $A_2$ use $y_1$ and $y_2$ to recover the secret $s$.
	They obtain the following system of linear equations on $F_p$ :
	\begin{equation}
		\begin{cases}s+ax_1=y_1\\s+ax_2=y_2&\end{cases}.
	\end{equation}
	
	Since $x_1$ and $x_2$ are distinct, this system of equations has a unique solution on $F_p$, thus they obtain the secret $s$.
	
	The MAS $A_1A_3, A_2A_3$ can also refactor the secret $s$ in a similar way respectively. It is easy to prove that the unauthorized subset in $\varGamma_5$ cannot obtain any information about the secret $s$.
	
	When the participants in the MAS share the secret $s$, each participant $P^{(i)}_j$ only receives one shares $y^{(i)}_{j_i}, j_i\in[m_i], i\in [3]$, so $\rho^*_C(\varGamma_5)=1$.
	
	Next we will prove that there also exists an ideal secret sharing scheme for $\varGamma_6=\{A_1A_2A_4, A_1A_3A_4, A_2A_3A_4\}$ to implement it. To share the secret $s$ among the participant  set $P$, let $s=s_1+s_2$. Apply the $(|A_4|, |A_4|)$-threshold scheme to the participant in $A_4$ to share $s_1$.
	For the new access structure $\varGamma'=\{A_1A_2, A_1A_3, A_2A_3\}$, share $s_2$ according to the above proof method on $\varGamma_5$. Since the threshold schemes are all ideal, and independent schemes are used among non - intersecting participants, the final scheme for implementing access structure $\varGamma_6$ is also ideal, i.e. $\rho^*_C(\varGamma_6)=1$.
\end{proof}

\begin{lemma}
	Let $\varGamma_i$ be a hypercycle containing three hyperedges, then $\rho^*_C(\varGamma_i)\leq \frac{2}{3}$, $i\in[7 ,11]$.
\end{lemma}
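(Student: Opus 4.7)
I would approach Lemma III.4 by a direct appeal to Lemma II.2, which for any hypercycle with three hyperedges supplies the entropy inequality $H(P_i)+H(P_j) \geq 3H(S)$ for participants $P_i, P_j$ lying in the 2-regions. The whole task then reduces to producing such a pair for each of the five access structures and converting that entropy bound into the claimed bound on the information rate.

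First, for each $i \in [7,11]$ I would enumerate the three 2-regions of $\varGamma_i$ by forming the pairwise intersections of its hyperedges and subtracting the third hyperedge. A short case check, based on the explicit forms listed in Theorem III.1, shows that these 2-regions coincide with the disjoint blocks $A_1, A_2, A_3$: for $\varGamma_7, \varGamma_8, \varGamma_9$ they are literally the pairwise intersections, while for $\varGamma_{10}$ and $\varGamma_{11}$ they appear only after the block $A_4$, which lies inside all three hyperedges, is cancelled from each intersection. In every case the three 2-regions are non-empty and pairwise disjoint, so two distinct participants $P_i \in B_r$ and $P_j \in B_s$ can always be chosen.

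With such a pair in hand, Lemma II.2 yields
\begin{equation*}
H(P_i) + H(P_j) \geq 3 H(S),
\end{equation*}
and therefore
\begin{equation*}
\max\{H(P_i), H(P_j)\} \geq \frac{H(P_i)+H(P_j)}{2} \geq \frac{3}{2} H(S).
\end{equation*}
Substituting this into the definition of the information rate $\rho = H(S)/\max_k H(P_k)$ forces $\rho \leq 2/3$ for every perfect secret sharing scheme realising $\varGamma_i$. Taking the supremum over all such schemes delivers $\rho^*(\varGamma_i) \leq 2/3$, as claimed.

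The main obstacle is not analytic but combinatorial bookkeeping: one must verify, for each of the five structures, that its 2-regions really do contain at least two distinct participants. This is most delicate for $\varGamma_{10}$ and $\varGamma_{11}$, where the block $A_4$ is shared across all three hyperedges and the 2-regions are obtained only after cancelling $A_4$ from each pairwise intersection; a brief direct computation shows that the residues $A_1, A_2, A_3$ remain non-empty and pairwise disjoint, so the argument applies uniformly across $i \in [7,11]$.
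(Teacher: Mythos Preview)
Your proposal is correct and follows essentially the same route as the paper: invoke Lemma~II.2 on a pair of participants drawn from two non-empty 2-regions, convert $H(P_i)+H(P_j)\ge 3H(S)$ into $\max_k H(P_k)\ge \tfrac{3}{2}H(S)$, and read off $\rho^*\le 2/3$. Your treatment is in fact more careful than the paper's---you explicitly identify the 2-regions as $A_1,A_2,A_3$ (after cancelling the 3-region $A_4$ for $\varGamma_{10},\varGamma_{11}$) and pass through the max correctly, whereas the paper simply asserts two non-empty 2-regions exist and writes the slightly imprecise ``$H(P_i)\ge \tfrac{3}{2}H(S)$''.
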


\begin{proof}
	Since $\varGamma_i(i\in[7, 11])$ has at least one hyperedge which has its own independent point set, let $B_r$ and $B_s$ be two non - empty 2-regions of $\varGamma_i$. It is easy to know from Lemma 2.2 that for any $P_i\in B_r$, $P_j\in B_s$, $H(P_i)+H(P_j)\geq 3H(S)$. And then $H(P_i)\geq\frac{3H(S)}{2}$. So $max_{P_{i}\in V}H\left(P_{i}\right)\geq\frac{3H(S)}{2}$, i.e. $\rho^{*}_C(\Gamma_{i})\leq\frac{H(S)}{max_{P_{i}\in V}H(P_{i})}\leq\frac{2}{3},i\in[7,11]$.
\end{proof}

\begin{theorem}
	%	Theorem 3.4
	Let $\varGamma_i$ be a hypercycle containing three hyperedges, then $\rho^{*}(\varGamma_{i})=\frac{2}{3},i\in[7,11]$.
\end{theorem}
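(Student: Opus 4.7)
The plan is to prove $\rho^*(\varGamma_i) = 2/3$ for each $i \in [7,11]$ by matching the upper bound $\rho^*(\varGamma_i) \leq 2/3$ from Lemma III.4 with an explicit construction of a perfect classical secret sharing scheme on $\varGamma_i$ with information rate exactly $2/3$. The construction combines Simmons' geometric construction (Section II.E), Shamir's threshold scheme, and the decomposition construction of Lemma II.1.

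The template is to treat the secret as a pair $(s_1, s_2) \in F_p^2$ and construct two sub-schemes $\Sigma_1, \Sigma_2$ that each realise $\varGamma_i$ and each contribute one component of the secret. Within each sub-scheme $\Sigma_h$, I would follow the spirit of the proof of Theorem III.3: pick an ideal decomposition $D_h$ of $\varGamma_i$ whose pieces are single MAS or ideal two-edge hyperstars, choosing the hub of any paired piece from the $2$-regions of $\varGamma_i$. The piece-secret is then shared by Shamir's $(2, n)$-threshold across subset indices combined with $(m_j, m_j)$-additive sharing inside each subset $A_j$. The concrete algebraic realisation of each piece is provided by Simmons' geometric construction, placing public points $K_0, K_{i,j} \in V_I$ and distributing shares $\lambda_{i,j} = k - A K_{i,j}$, exactly as in Example~1. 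Invoking Lemma II.1 with $l = 2$ gives information rate $\rho = l/R = 2/R$, where $R = \max_i R_i$ is the maximum total share count across the two sub-schemes.

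The critical step is choosing $D_1, D_2$ and the Simmons point configuration for each $\varGamma_i$ so that $R = 3$, i.e.\ no participant receives more than three shares in total. Lemma III.4's pigeonhole bound $H(P_i) + H(P_j) \geq 3 H(S)$ forces at least one of any two participants in distinct $2$-regions to carry $\geq 3/2$ units of share, so $R = 3$ is the best achievable; reaching it requires distributing the "extra" share across different $2$-regions in $\Sigma_1$ and $\Sigma_2$. For each of the five structures I would spell out the chosen decompositions and the Simmons point configuration, then (i) solve the affine-combination system $\sum \mu_{i,j} K_{i,j} = K_0$, $\sum \mu_{i,j} = 1$ for every MAS, confirming recoverability, (ii) check inconsistency of the same system for every maximal unauthorised subset, confirming perfect secrecy, and (iii) count $R_i$ to confirm $R \leq 3$.

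The principal difficulty is the fully symmetric cases $\varGamma_9$ and $\varGamma_{11}$, in which the three $2$-regions play interchangeable structural roles. For these, a naive pair of hub-decompositions inevitably leaves one $2$-region with aggregate count $4$ rather than $3$, giving only $\rho = 1/2$. Overcoming this requires the geometric flexibility of Simmons' construction in a larger $AG(m, F_p)$: by introducing additional public points $K_{i,j}$ one arranges that one share of the "stuck" $2$-region can be fused with a $1$-region share into a common affine combination producing $K_0$, so that a single Simmons point suffices for that $2$-region in one of the two sub-schemes. The bookkeeping of the point configuration, and the verification that it continues to enforce perfect secrecy on every maximal unauthorised subset, is the most technical part of the argument. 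Once the construction is pinned down for all five $\varGamma_i$, the lower bound $\rho^*(\varGamma_i) \geq 2/3$ follows and, combined with Lemma III.4, completes the proof.
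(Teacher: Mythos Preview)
Your high-level strategy (upper bound from the entropy lemma, lower bound from an explicit rate-$2/3$ scheme built from two layers) matches the paper, but the description of the second layer contains a genuine gap.

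You frame both sub-schemes $\Sigma_1,\Sigma_2$ as \emph{ideal decompositions} into singletons and two-edge hyperstars, and then correctly observe that for $\varGamma_9$ any pair of such decompositions forces some $2$-region to collect four shares, yielding only $\rho=1/2$. Your proposed escape --- ``fusing'' a $2$-region share with a $1$-region share so that a single Simmons point suffices --- is not a mechanism that actually exists: distinct shares $\lambda_{i,j}$ are independent field elements and cannot be merged without destroying recoverability for one of the two MASs that use them. The paper does something structurally different. Only the first layer $D_1$ is an ideal decomposition ($\{A_1A_2A_4\}$ together with the hyperstar $\{A_1A_3A_6,A_2A_3A_5\}$, hub $A_3$). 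The second layer $D_2$ is \emph{not} a decomposition at all: it is a single Simmons scheme realising the whole of $\varGamma_9$ in one piece, in $AG(6,F_p)$, with a deliberately non-uniform point allocation --- two points for $A_3$ and two for $A_4$, one point for each of $A_1,A_2,A_5,A_6$. The crucial bookkeeping is that $A_3$ and $A_4$ are exactly the sets receiving only one share in $D_1$, so the totals level out at $3$ for $A_1,A_2,A_3,A_4$ and $2$ for $A_5,A_6$. The flexibility you need is not ``fusing'' but the freedom to decide \emph{which} participants carry the extra Simmons points; that is what the ideal-decomposition framework of Lemma~II.1 cannot give you and why the paper steps outside it for $D_2$.

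A second omission: the paper does not redo the construction for each $\varGamma_i$. It builds the scheme for $\varGamma_9$ only, then obtains $\varGamma_8$ and $\varGamma_7$ by setting $A_6$ (respectively $A_5,A_6$) to $\emptyset$, and obtains $\varGamma_{10},\varGamma_{11}$ by observing that $Rem(\varGamma_{10},A_4)\cong\varGamma_7$ and $Rem(\varGamma_{11},A_4)\cong\varGamma_8$; the removed $3$-region $A_4$ then carries a separate additive $(|A_4|,|A_4|)$-share of a split secret. In particular $\varGamma_{11}$ is not a ``hard symmetric case'' on the same footing as $\varGamma_9$: it has a $3$-region that breaks the symmetry and allows the reduction. (Also, the upper-bound lemma you cite is Lemma~III.1 in the paper's numbering, not III.4.)
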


\begin{proof}
	According to Lemma 3.1, we have $\rho^*_C(\varGamma_i)\leq\frac{2}{3}(i\in[7, 11])$. Now, we only need to construct an explicit PCSS scheme $\Gamma_{i}(i\in[7, 11])$ with information rate $\frac{2}{3}$ for these hypercycle. First, establish a PCSS scheme for $\varGamma_9$.
	
	Consider the following two decomposition. Let
	\begin{equation}
		\begin{aligned}
			&D_1=\{\varGamma_{1,1}, \varGamma_{1,2}\}, D_2=\{A_1A_2A_4, A_1A_3A_6, A_2A_3A_5\}.
		\end{aligned}
	\end{equation}	
where $\varGamma_{1,1}=\{A_1A_2A_4\}$, $\varGamma_{1,2}=\{A_1A_3A_6, A_2A_3A_5\}$.
	
	Obviously, $D_1$ is an ideal decomposition, where $\varGamma_{1,1}$ can be realized by the $(3, 3)$-threshold scheme, and $\varGamma_{1,2}$ is a hyperstar containing two hyperedges. For $D_2$, we will use Simmons' geometric method to give an implementation scheme as follow.
\end{proof}

\noindent\textbf{Distribution of shares in $\bm{D_1}$ and reconstruction of the secret $\bm{s}$.}

Alice selects 6 different non-zero elements in $F_p$, denoted as $x_{[6]}$, and uses them as the identities of participant in the subset $A_{[6]}$ respectively. Here, $x_{[6]}$ are public. Note that the identity of each participant in $A_i$ is the same, and $s\in F_p$ is the secret that participants of the MAS  will share.

\textbf{(1) For $\bm{\varGamma_{1,1}}$, Alice applies the $\bm{(3,3)}$-threshold scheme to give sub-shares to the participant subset in $\bm{A_1}, \bm{A_2}, \bm{A_4}$ respectively.}

Alice secretly selects two elements $a_1, b_1$ in $F_p$, sets $f_1(x)=s+a_1x+b_1x^2$, and calculates $f_1(x)(i\in\{1,2,4\})$. Further, Alice secretly selects $m_i-1$ elements in $F_p$, denoted as $y_{i,1}^{(1,1)}, y_{i,2}^{(1,1)},\cdots,y_{i,m_i-1}^{(1,1)}$, and calculates $y_{i,m_i}^{(1,1)}=(f_1(x_i)-\sum^{m_i-1}_{j_i=1}y^{(1,1)}_{i,j_i})\mathrm{~mod~} p$. Then, Alice sends the share $y_{i,j_i}^{(1,1)}$ to $P_{j_i}^{(i)}$ through a secure channel, $i\in\{1,2,4\}$, $j_i\in[m_i]$.

Thus each participant in the participant set $P$ has received their own shares.

The participants in $A_i$ apply $(m_i, m_i)$-threshold scheme share subkey $f_1(x_i)$. For example, the shares of all participants in $A_1$ are accumulated to refactor the subkey $f_1(x_1)$, i.e., $f_1(x_1)=\sum^{m_1}_{j_1=1}y^{(1,1)}_{1, j_1} \mathrm{~mod~} p$. Similarly, the shares of all participants in $A_2$ and $A_4$ are accumulated to refactor the subkey $f_1(x_2)$ and $f_1(x_4)$, respectively.

As representatives in the subset $A_1,A_2,A_4$, $P_1^{(1)},P_1^{(2)},P_1^{(4)}$ solve the following system of linear equations:
\begin{equation}
	\begin{cases}
		s+a_1x_1+b_1x_1^2=f_1(x_1)\\
		s+a_1x_2+b_1x_2^2=f_1(x_2)\\
		s+a_1x_4+b_1x_4^2=f_1(x_4)
	\end{cases}.
\end{equation}

Since $x_1,x_2,x_4$ are distinct, this system of equations has a unique solution over $F_p$, thus the secret $s$ is obtained.

\textbf{(2) For \bm{$\varGamma_{1,2}$}, Alice let \bm{$s=s_1+s_2, s_1,s_2\in F_p$}.}

Alice secretly selects $m_3\! -\!1$ elements from $F_p$, denoted as $y_{3,1}^{(1,2)}, y_{3,2}^{(1,2)}, \cdots, y_{3,m_3-1}^{(1,2)}$, and calculates $y_{3,m_3}^{(1,2)}=(s_1-\sum_{j_{3}=1}^{m_{3}-1}y_{3,j_{3}}^{(1,2)})\mathrm{~mod~} p$. Then, she sends the share $y_{3,j_3}^{(1,2)}$ to $P_{j_3}^{(3)}$ through a secure channel, here $j_3\in[m_3]$.

In the $Remove(\varGamma_{1,2},A_{3})=\{A_{1}A_{6},A_{2}A_{5}\}$ share secret $s_2$. Since $A_1A_6$ and $A_2A_5$ are disjoint. Alice sets $f_2(x)=s+a_2x$, secretly selects $m_i-1$ elements from $F_p$, denoted as $y_{i,1}^{(1,2)}y_{i,1}^{(1,2)},\cdots,y_{i,m_{i}-1}^{(1,2)}$, and calculates $y_{i,m_{i}}^{(1,2)}=(f_{2}(x_{i})-\sum_{j_{i}=1}^{m_{i}-1}y_{i,j_{i}}^{(1,2)})\mathrm{~mod~}p,i\in\{1,6\}$. Then, she sends the share $y_{i,j_i}^{(1,2)}$ to $P^{(i)}_{j_3}$ through a secure channel, $j_i\in[m_i], i\in\{1,6\}$.
Alice sets $f_3(x)=s+a_3x$, and secretly selects $m_i-1$ elements from $F_p$, denoted as $y_{i,1}^{(1,2)}y_{i,1}^{(1,2)},\cdots,y_{i,m_{i}-1}^{(1,2)}$, and calculates $y_{i,m_{i}}^{(1,2)}=(f_{3}(x_{i})-\sum_{i,=1}^{m_{i}-1}y_{i,j_{i}}^{(1,2)})\mathrm{~mod~}p,i\in\{2,5\}$. Then, she sends the share $y_{i,j_i}^{(1,2)}$ to $P^{(i)}_{j_3}$ through a secure channel, here $j_i\in[m_i], i\in\{2,5\}$.

$P_1^{(i)}$ acts as the representative in the subset $A_i$, where $i\in\{1,2,3,5,6\}$, and they recover the main secret $s$ in the following way.

The participants in set $A_3$ leverage $(|A_3|,|A_3|)$-threshold scheme share the subsecret $s_1$, that is, they calculate and get $s_1=\sum_{j_{3}=1}^{m_{3}}y_{3,j_{3}}^{(1,2)}\mathrm{~mod~} p$.

$P_1^{(1)}, P_1^{(6)}$ as representatives in subset $A_1, A_6$, respectively, solve the following system of linear equations:

\begin{equation}
	\begin{cases}s_2+a_3x_2=f_2(x_1)\\s_2+a_3x_5=f_2(x_6)&\end{cases}
\end{equation}

Since $x_1, x_6$ are distinct, this system of equations has a unique solution over $F_p$. Thus the subsecret $s_2$ is obtained.

$P_1^{(2)}, P_1^{(5)}$, as representatives in subset $A_2,A_5$,respectively, solve the following system of equations:
\begin{equation}
	\begin{cases}s_2+a_3x_2=f_3(x_2)\\s_2+a_3x_5=f_3(x_5)&\end{cases}
\end{equation}

Since $x_2, x_5$ are distinct, this system of equations has a unique solution over $F_p$, and thus the secret $s_2$ is obtained.

Finally, the MAS $A_1A_3A_6$ (or $A_2A_3A_5$) obtains the secret $s$ by $s=s_1+s_2$.

\noindent{\textbf{ Distribution of share in \bm{$D_2$} and reconstruction of the secret \bm{$k$}.}}

Given $\varGamma_{2,1}=$$\{A_{1}A_{2}A_{4}, A_{1}A_{3}A_{6}, A_{2}A_{3}A_{5}\}$, let $V_{I}=$ $\{(x_{1}$, $x_{2}$, $x_{3}$, $x_{4},x_{5},x_{6})$ $|x_{6}=0\}$,
$V_{D}=\{(x_{1}$, $x_{2},x_{3}$, $x_{4}$, $x_{5}$, $x_{6})|x_{i}$$=0,i=1,2,3,4,5\}$,
$\Pi=\{(x_{1},x_{2}$, $x_{3},x_{4},x_{5}$, $x_{6})|A(x_{1}$,$x_{2},x_{3}$,$x_{4}$,$x_{5}$,$x_{6})^{T}=k\}$,
where $A=(a,b,c,d,e,1)$, $a,b,c,d,e\in F_{11}$.

Suppose $\Pi$ intersects $V_D$ at $k$. Alice selects 8 points $K_{1,1}$, $K_{2,1}$, $K_{3,1}$, $K_{3,2}$, $K_{4,1}$, $K_{4,2}$, $K_{51}$, $K_{6,1}$ on $V_I$ such that any three of the 9 points $K_{1,1}$, $K_{2,1}$, $K_{3,1}$, $K_{3,2}$, $K_{4,1}$, $K_{4,2}$, $K_{51}$, $K_{6,1}$, $K_{7,1}$, $K_{0}$ are non-collinear.

Here
\begin{eqnarray*}
	\begin{aligned}
		&K_{1,1}=(1,a_{1},3,4,5,0),a_{1}\neq2,\\
		&K_{2,1}=(2,1,a_{2},3,a_{4},0),a_{2}\neq2,a_{4}\neq4,\\
		&K_{3,1}=(3,0,b_{2},a_{3},3,0),b_{2}\neq1,a_{3}\neq2,\\
		&K_{3,2}=(4,-1,c_{2},b_{3},2,0),c_{2}\neq0,b_{3}\neq1,\\
		&K_{4,1}=(5,-2,-1,c_{3},b_{4},0),c_{3}\neq0,b_{4}\neq1,\\
		&K_{4,2}=(6,-3,-2,d_{3},c_{4},0),d_{3}\neq-1,c_{4}\neq0,\\
		&K_{5,1}=(7,b_{1},-3,-2,-1,0),b_{1}\neq-4,\\
		&K_{6,1}=(8,-5,-4,-3,d_{4},0),d_{4}\neq-2,\\
		&K_{0}=(0,0,0,0,0,0).
	\end{aligned}
\end{eqnarray*}
Alice makes $K_{1,1},K_{2,1},K_{3,1},K_{3,2},K_{4,1},K_{4,2},K_{5,1},K_{6,1}$ and $K_0$ be public. Since

\begin{eqnarray*}
	\begin{aligned}
		& V_{D}=\{K_{0}+\lambda e|\lambda\in F_{p}\},\\
		& L_{i,j}=\{K_{i,j}+\lambda_{i,j}e|\lambda_{i,j}\in F_{p}\},\quad i\in[6],
	\end{aligned}
\end{eqnarray*}
\noindent{here} $e=(0,0,0,0,0,1)$. Alice calculates $\lambda_{i,j}=k-AK_{i,j},i\in[6],j\in[2],$ and shares $\lambda_{i,j}$ among the participant in the set $A_i$ through the $(m_i,m_i)$-threshold scheme.

Specifically, for $\lambda_{i,j}$, Alice secretly selects $m_i-1$ elements in $F_p$, denoted as $y_{i,1,j}^{(2,1)}y_{i,2,j}^{(2,1)},\cdots,y_{i,m_{i}-1,j}^{(2,1)}$, and calculates $y_{i,m_{i,j}}^{(2,1)}=(\lambda_{i,j}-\sum_{k_{i}=1}^{m_{i}-1}y_{i,k_{i,j}}^{(2,1)})\mathrm{~mod~} p$. Then, she sends the share $y_{i,j,k_{i}}^{(2,1)}$ to the $P_{k_{i}}^{(i)}$ through the secure channel, where $k_{i}\in[m_{i}],i\in[6]$. The participants in the set $A_i$ calculate:
\begin{equation}
	\lambda_{i,j}=\sum_{k_{i}=1}^{m_{i}}y_{i,k_{i},j}^{(2,1)})\mathrm{~mod~} p.
\end{equation}

$P_1^{(1)},P_1^{(2)},P_1^{(4)}$, as representatives in the subsets $A_1,A_2,A_4$, solve the following system of equations
\begin{equation}
	\begin{cases}\mu_{1,1}K_{1,1}+\mu_{2,1}K_{2,1}+\mu_{4,1}K_{4,1}+\mu_{4,2}K_{4,2}=K_{0}\\\mu_{1,1}+\mu_{2,1}+\mu_{4,1}+\mu_{4,2}=1&\end{cases}
\end{equation}

\noindent{then} obtain $(\mu_{1,1},\mu_{2,1},\mu_{4,1},\mu_{4,2})$.

By further substitution, they can obtain $\lambda=\mu_{1,1}\lambda_{1,1}+\mu_{2,1}\lambda_{2,1}+\mu_{4,1}\lambda_{4,1}+\mu_{4,2}\lambda_{4,2}=k$.

Similarly, for the MAS $A_1A_3A_6$ and $A_2A_3A_5$, $k$ can be solved using the above method.

Since this scheme imposes restrictions on the selection of $K_{i,j}$, it ensures that the unauthorized subset in the access structure cannot obtain any information about the secret $k$. Therefore, this scheme is a PCSS scheme.

Furthermore, in this scheme, the shared secret is $(s,k)$, and the share received by the participant $P_1^{(i)}(i\in[6])$ is as follows:
\begin{equation}
	\begin{aligned}
		&P_{1}^{(1)}\text{~receives~}(y_{1,1}^{(1,1)},y_{1,1}^{(1,2)},y_{1,1,1}^{(2,1)});\\
		&P_{1}^{(2)}\text{~receives~}(y_{2,1}^{(1,1)},y_{2,1}^{(1,2)},y_{2,1,1}^{(2,1)});\\
		&P_{1}^{(3)}\text{~receives~}(y_{3,1}^{(1,2)},y_{3,1,1}^{(2,1)},y_{3,1,2}^{(2,1)})\\
		&P_{1}^{(4)}\text{~receives~}(y_{4,1}^{(1,1)},y_{4,1,1}^{(2,1)},y_{4,1,2}^{(2,1)});\\
		&P_1^{(5)}\text{~receives~}(y_{5,1}^{(1,2)},y_{5,1,1}^{(2,1)});\\
		&P_{1}^{(6)}\text{~receives~}(y_{6,1}^{(1,2)},y_{6,1,1}^{(2,1)}).
	\end{aligned}
\end{equation}

Thus, the information rate of this scheme reaches $\frac{2}{3}$. Combining with the above analysis, we know that $\rho^{*}_c\left(H_{3}(P,\Gamma_{9})\right)=\frac{2}{3}$.

By performing the following technical processing on $\varGamma_9$, we can obtain a PCSS scheme on $\varGamma_{i}(i\in\{7,8,10,11\})$ and the OIR as follows.
Setting $A_6=\emptyset$ in $\varGamma_9$, a similar scheme to $\varGamma_9$ can be applied to $\varGamma_8$, we can obtain $\rho^*_C(\varGamma_8)=\frac{2}{3}$. Similarly, by applying a scheme similar to that of $\varGamma_8$ to $\varGamma_7$, $\rho^*_C(\varGamma_7)=\frac{2}{3}$ can be obtained.
Since $Rem(\Gamma_{10},A_{4})=\{A_{1}A_{2}, A_{1}A_{3}, A_{2}A_{3}A_{5}\}\cong\varGamma_{7}$, let $s=(s_{1}+s_{2,1}, s_{1}+s_{2,2})$, where $s_1$ is obtained from the participant in set $A_4$ through $(|A_4|, |A_4|)$-threshold scheme share. Perform two-layer decomposition on $\varGamma_7$ to construct share $(s_{2,1}$, $s_{2,2})$, finally we can get $\rho_C^*(\varGamma_{10})=\frac{2}{3}$. Since $Rem(\Gamma_{11},A_{4})=\{A_{1}A_{2}, A_{1}A_{3}A_{6}, A_{2}A_{3}A_{5}\}\cong\varGamma_{8}$, using a similar method of finding  the OIR of $\varGamma_{10}$, we can get $\rho_C^*(\varGamma_{11})=\frac{2}{3}$.

\begin{theorem}
	%Theorem 3.5
	If the hypergraph is a hypercycle $\varGamma_{12}$ containing three hyperedges, then $\rho^{*}_C(\varGamma_{12})=\frac{2}{3}$.
\end{theorem}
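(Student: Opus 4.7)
The plan is to prove matching upper and lower bounds on $\rho^*(\varGamma_{12})$ by mirroring the treatment of $\varGamma_9$ in the preceding theorem: first derive $\rho^*(\varGamma_{12}) \le \frac{2}{3}$ from Lemma II.2 applied to the 2-regions, and then realize the bound by an explicit perfect scheme that distributes two independent secrets $s$ and $k$ through an ideal decomposition combined with a Simmons geometric scheme.

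For the upper bound I will first compute the 2-regions of $\varGamma_{12} = \{A_1A_2A_3, A_1A_2A_4, A_2A_3A_5\}$. Writing $E_1 = A_1\cup A_2\cup A_3$, $E_2 = A_1\cup A_2\cup A_4$, $E_3 = A_2\cup A_3\cup A_5$, a direct computation yields $B_1 = (E_1\cap E_2)\setminus E_3 = A_1$, $B_2 = (E_1\cap E_3)\setminus E_2 = A_3$, $B_3 = (E_2\cap E_3)\setminus E_1 = \emptyset$, with $A_2$ the unique 3-region. Since two of the three 2-regions are non-empty, the argument used in the proof of Lemma III.1 applies verbatim: by Lemma II.2, for any $P_i \in A_1$ and $P_j \in A_3$ one has $H(P_i)+H(P_j)\ge 3H(S)$, whence $\max_{P_i\in P}H(P_i)\ge \frac{3}{2}H(S)$ and therefore $\rho^*(\varGamma_{12})\le \frac{2}{3}$.

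For the matching lower bound I will exhibit a perfect scheme with information rate exactly $\frac{2}{3}$. Use the ideal decomposition $D_1 = \{\varGamma_{1,1}, \varGamma_{1,2}\}$ to share a first secret $s$, where $\varGamma_{1,1} = \{A_1A_2A_3\}$ is a single hyperedge handled by an additive split of $s$ across $A_1\cup A_2\cup A_3$, and $\varGamma_{1,2} = \{A_1A_2A_4, A_2A_3A_5\}$ is a 2-hyperstar with intersection $A_2$ handled by writing $s=s_1+s_2$, splitting $s_1$ additively among $A_2$, and splitting $s_2$ independently inside the two disjoint arms $A_1\cup A_4$ and $A_3\cup A_5$. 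To share an independent second secret $k$, apply the Simmons construction directly to $\varGamma_{12}$ with multiplicities $R_1=R_2=R_3=1$ and $R_4=R_5=2$: fix a hyperplane $V_I$ and a line $V_D$ in $AG(m,F_p)$ meeting at $K_0$, pick a direction vector $\varepsilon$, and choose seven public points $K_{1,1}, K_{2,1}, K_{3,1}, K_{4,1}, K_{4,2}, K_{5,1}, K_{5,2}\in V_I$. Each share $\lambda_{i,j}=k-AK_{i,j}$ is then delivered into $A_i$ through a $(|A_i|,|A_i|)$-threshold scheme, exactly mimicking the $\varGamma_9$ construction.

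The main obstacle is selecting the seven Simmons points so that the access and security conditions hold simultaneously. The three affine spans $\operatorname{Span}\{K_{1,1}, K_{2,1}, K_{3,1}\}$, $\operatorname{Span}\{K_{1,1}, K_{2,1}, K_{4,1}, K_{4,2}\}$ and $\operatorname{Span}\{K_{2,1}, K_{3,1}, K_{5,1}, K_{5,2}\}$ must each contain $K_0$, so that every minimal authorized subset can solve the affine-combination system for coefficients $\mu_{i,j}$ with $\sum\mu_{i,j}=1$ and recover $k=\sum\mu_{i,j}\lambda_{i,j}$; simultaneously, the affine span of every maximal unauthorized subset (e.g.\ $A_1A_2A_5$, $A_2A_3A_4$ and $A_1A_3A_4A_5$) must avoid $K_0$. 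This is a finite list of linear constraints over $F_p$: the three authorization requirements fix $K_0$ as three prescribed affine combinations of the chosen points, while each security requirement excludes a proper linear subvariety from the parameter space of the remaining free coordinates, so for $p$ (and, say, $m=6$) sufficiently large one can either take a generic choice outside the bad loci or exhibit an explicit configuration in analogy with the one produced for $\varGamma_9$. Finally, counting shares across both decompositions, every participant of $A_1, A_2, A_3$ holds $2+1=3$ sub-shares and every participant of $A_4, A_5$ holds $1+2=3$ sub-shares, while the scheme delivers $l=2$ independent secrets; thus the overall information rate is $\frac{l}{R}=\frac{2}{3}$, matching the upper bound and proving $\rho^*(\varGamma_{12})=\frac{2}{3}$.
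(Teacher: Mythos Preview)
Your upper bound is fine: the two non-empty 2-regions of $\varGamma_{12}$ are indeed $A_1$ and $A_3$, and the entropy inequality $H(P_i)+H(P_j)\ge 3H(S)$ for $P_i\in A_1$, $P_j\in A_3$ holds (it follows, for instance, from the Capocelli--De~Santis--Gargano--Vaccaro path bound after removing the 3-region $A_2$).

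The gap is in your lower bound. The Simmons layer you propose, with multiplicities $R_1=R_2=R_3=1$ and $R_4=R_5=2$, would by itself have to be a \emph{perfect} scheme for $\varGamma_{12}$ sharing the single secret $k$. But the very inequality you used for the upper bound applies to that standalone scheme as well: any perfect scheme for $\varGamma_{12}$ must satisfy $H(P_i)+H(P_j)\ge 3H(k)$ for $P_i\in A_1$, $P_j\in A_3$, which forces $R_1+R_3\ge 3$. Your choice $R_1=R_3=1$ therefore cannot work for \emph{any} placement of the points $K_{i,j}$, so the ``generic choice outside the bad loci'' argument is vacuous---the obstruction is information-theoretic, not algebraic. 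Concretely, one of the maximal unauthorized sets (e.g.\ $A_1A_3A_4A_5$) will always be able to recover $k$.

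The paper sidesteps this entirely. It does not apply Simmons to $\varGamma_{12}$ at all; instead it removes the 3-region $A_2$ to obtain the hyperpath $Rem(\varGamma_{12},A_2)=\{A_1A_3,\,A_1A_4,\,A_3A_5\}$ and uses \emph{two ideal decompositions} of that path,
\[
D_1=\bigl\{\{A_1A_4\},\{A_1A_3,A_3A_5\}\bigr\},\qquad
D_2=\bigl\{\{A_3A_5\},\{A_1A_3,A_1A_4\}\bigr\},
\]
sharing $s_{2,1}$ and $s_{2,2}$ respectively, while $A_2$ shares $s_1$ via a $(|A_2|,|A_2|)$-threshold and the secret is $s=(s_1+s_{2,1},\,s_1+s_{2,2})$. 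A direct count then gives rate $\tfrac{2}{3}$. If you want to repair your argument, replacing your Simmons layer by this second ideal path-decomposition is the cleanest fix.
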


\begin{proof}
	Since $\varGamma_{12}=\{A_{1}A_{2}A_{3},A_{1}A_{2}A_{4},A_{2}A_{3}A_{5}\}$, and $Rem\{\varGamma_{12},A_{2}\}=\{A_{1}A_{3},A_{1}A_{4}$, $A_{3}A_{5}\}$ is a path, perform a 2-layer decomposition, and constructions on $Rem\{\varGamma_{12},A_{2}\}$ are as follows:
	\begin{eqnarray*}
		D_{1}=\{\varGamma_{1,1},\varGamma_{1,2}\},
	\end{eqnarray*}
	\noindent{where} $\varGamma_{1,1}=\{A_{1}A_{4}\}$, $\varGamma_{1,2}=\{A_{1}A_{3}$, $A_{3}A_{5}\}$.
	
	\begin{eqnarray*}
		D_{2}=\{\varGamma_{2,1},\varGamma_{2,2}\},
	\end{eqnarray*}
	
	\noindent{where} $\varGamma_{2,1}=\{A_{3}A_{5}\}$, $\varGamma_{2,2}=\{A_{1}A_{3}$, $A_{1}A_{4}\}$.
	
	Obviously, both $D_1$ and $D_2$ are two ideal coverages.
	
	Let $s=(s_1+s_{2,1}, s_1+s_{2,2})$. The participant in set $A_2$ uses $(|A_2|,|A_2|)$-threshold scheme share $s_1$, $D_1$ can share $s_{2,1}$, and $D_2$ can share $s_{2,2}$. As a simple calculation, $\rho^{*}(\varGamma_{12})=\frac{2}{3}$ can be obtained.
	
	\textbf{Example 2} Let ${\varGamma_{0}}^{(2)}=\{124,136,235\}$. We set $D_{1}=\{\varGamma_{1,1},\varGamma_{1,2}\}$, $D_{2}=\{\varGamma_{2,1}\}=\{124,136,235\}$.
where $\varGamma_{1,1}=\{124\},\varGamma_{1,2}=\{136,235\}$.
	
	Obviously, $D_1$ is an ideal decomposition. $\varGamma_{1,1}$ can be realized by a $(3,3)$-threshold. And $\varGamma_{1,2}$ is a hyperstar containing two hyperedges. The authorized sets in the two access structures can both recover the key $s$.
	
	A solution for $D_2$ will be presented using Simmons' geometric method. Let
	\begin{eqnarray*}
		\begin{aligned}
			&V_{I}=\{(x_{1},x_{2},x_{3},x_{4},x_{5},x_{6})|x_{6}=0\},\\
			&V_{D}=\{(x_{1},x_{2},x_{3},x_{4},x_{4},x_{5},x_{6})|x_{i}=0,i\in[5]\},\\
			&\Pi=\{(x_{1},x_{2},x_{3},x_{4},x_{5},x_{6})|A(x_{1},x_{2},x_{3},x_{4},x_{5},x_{6})^{T}=k\},
		\end{aligned}
	\end{eqnarray*}
	
	\noindent{where} $A=(a,b,c,d,e,1)$, $a,b,c,d,e\in F_{11}$, and $\Pi$ intersects $V_D$ at $k$. Alice selects eight points $K_1,K_2,K_{3,1},K_{3,2}$, $K_{4,1}$, $K_{4,2}$, $K_{5,1}$, $K_{6,1}$ such that any three of the following ten points $K_{1,1},K_{2,1},K_{3,1},K_{3,2},K_{4,1}$,$K_{4,2},K_{5,1},K_{6,1},K_{0}$
	are not collinear. Here,
	\begin{eqnarray*}
		\begin{aligned}
			&K_{1,1}=(1,2,1,4,0,0), K_{2,1}=(2,1,3,2,4,0),\\
			&K_{3,1}=(3,0,1,1,2,0), K_{3,2}=(4,-1,0,2,3,0),\\
			&K_{4,1}=(5,1,-1,0,1,0), K_{4,2}=(6,1,-2,-1,0,0),\\
			&K_{5,1}=(7,1,-3,3,-1,0), K_{6,1}=(8,4,-4,-3,2,0),\\
			&K_{0}=(0,0,0,0,0).	
		\end{aligned}
	\end{eqnarray*}
	Alice makes $K_{1,1},K_{2,1},K_{3,1},K_{3,2},K_{4,1},K_{4,2},K_{5,1},K_{6,1},K_{0}$ be public.
	Since $V_{D}=\{K_{0}+\lambda e$ $|\lambda\in F_{11}\}$, $L_{i,j}=\{K_{i,j}+$ $\lambda_{i,j}e|\lambda_{i,j}\in F_{11}\},i\in[6].$ Here $e=(0,0,0,0,0,1)$.

Alice calculates $\lambda_{i,j}=k-AK_{i,j}(i\in[6],j\in[2])$ as follows:
	\begin{eqnarray*}
		\begin{aligned}
			&\lambda_{1,1}=k-AK_{1,1}=k-a-2b-1c-4d,\\
			&\lambda_{2,1}=k-AK_{2,1}=k-2a-b-3c-2d-4e,\\
			&\lambda_{3,1}=k-AK_{3,1}=k-3a-c-d-2e,\\
			&\lambda_{3,2}=k-AK_{3,2}=k-4a+b-2d-3e,\\
			&\lambda_{4,1}=k-AK_{4,1}=k-5a-b+c-e,\\
			&\lambda_{4,2}=k-AK_{4,2}=k-6a+b+2c+d,\\
			&\lambda_{5,1}=k-AK_{5,1}=k-7a-b+3c-3d+e,\\
			&\lambda_{6,1}=k-AK_{6,1}=k-8a-4b+4c+3d-2e.
		\end{aligned}
	\end{eqnarray*}
	For the MAS 124 in $\varGamma_{0}^{(2)}=\{124,136,235\}$, participant $P_{1},P_{2},P_{4}$ solves the following system of equations
	\begin{equation}
		\begin{cases}
			\mu_{1,\!1}K_{1,\!1}\!+\!\mu_{2,\!1}K_{2,\!1}\!+\!\mu_{4,\!1}K_{4,\!1}\!+\!\mu_{4,\!2}K_{4,\!2}\!=\!K_{0}\\
			\mu_{1,\!1}+\mu_{2,\!1}+\mu_{4,\!1}+\mu_{4,\!2}=1
		\end{cases}
	\end{equation}
	
	\noindent{and} gets $(\mu_{1,1},\mu_{2,1},\mu_{4,1},\mu_{4,2})=(-1,-6,2,-5)$.
	Similarly, for the MAS 136 and 235, they can get
	\begin{eqnarray*}
		\begin{aligned}
			&(\mu_{1,1},\mu_{3,1},\mu_{3,2},\mu_{6,1})=(2,4,10,7),\\
			&(\mu_{2,1},\mu_{3,1},\mu_{3,2},\mu_{5,1})=(7,-1,-8,3).
		\end{aligned}
	\end{eqnarray*}
	
	For the MAS 124, participant $P_{1},P_{2},P_{4}$ can get the secret $\lambda=-1(k-a-2b-1c-4d)-6(k-2a-b-3c-2d-4e)+2(k-5a-b+c-e)-5(k-6a+b+2c+d)=k$.
	
	For the MAS 136 and 235, the secret $k$ can be solved in the same way.
\end{proof}

According to Theorem 3.4, a PCSS scheme with a two - layer decomposition on hypercycle quantum access structure $\varGamma_0^{(2)}$ is presented. The share of each participant $P_i(i\in[6])$ can be obtained as follows:
\begin{equation}
	\begin{aligned}
		&P_1\text{~receives~}(y_{1,1}^{(1,1)},y_{1,1}^{(1,2)},y_{1,1,1}^{(2,1)});\\
		&P_2\text{~receives~}(y_{2,1}^{(1,1)},y_{2,1}^{(1,2)},y_{2,1,1}^{(2,1)});\\
		&P_3\text{~receives~}(y_{3,1}^{(1,2)},y_{3,1,1}^{(2,1)},y_{3,1,2}^{(2,1)})\\
		&P_4\text{~receives~}(y_{4,1}^{(1,1)},y_{4,1,1}^{(2,1)},y_{4,1,2}^{(2,1)});\\
		&P_5\text{~receives~}(y_{5,1}^{(1,2)},y_{5,1,1}^{(2,1)});\\
		&P_6\text{~receives~}(y_{6,1}^{(1,2)},y_{6,1,1}^{(2,1)}),
	\end{aligned}
\end{equation}
where $y_{i,j}^{(2,1)}=\lambda_{i,j}$. Since this scheme shares the secret $(s,k)$, in the whole scheme,  the OIR of the access structure $\varGamma_0^{(2)}$ is $\frac{2}{3}$, i.e. $\rho_C^*{\varGamma_0^{(2)}}=\frac{2}{3}$.

\section{The QSS scheme based on the hypercycle access structure containing three hyperedges}

In this section, we consider the establishment of the PQSS scheme on the minimal access structure $\varGamma_9$, where $\varGamma_9=\{A_1A_2A_4,A_1A_3A_6,A_2A_3A_5\}$. According to the proof process of Theorem 3.4, access structure $\varGamma_9$ are specifically shown in Fig.2.

\subsection{System model}

The model belongs to a mixture network structure that is made up of classical and quantum structures, where this quantum structure refers to the $p$- dimensional Hilbert space. In this model, $P=\bigcup^6_{i=1}A_i$, where each $A_i$ is derived from the access structure $\varGamma_9$. We assume that each participant from the set $P$ can make quantum states corresponding to the mutually unbiased bases in the Hilbert spaces \cite{ref26}, operate these quantum states by unitary transform \cite{ref13}, and has the capability to measure the state of particles. Furthermore, we assume that the system key $K_j^{(0,i)}$ between Alice and each participant $P_j^{(i)}$, and the initial key $K^{(i,j)}$ between the sets $A_i$ and $A_j$ have been established using the BB84 protocol. We assumed that $K^{(i,j)}$ is owned by each participant from the sets $A_i$ and $A_j$, and it is important to stress that $K_j^{(0,i)}$ and $K^{(i,j)}$ are absolutely secure as system keys.

Since this paper primarily focuses on the relationship between the OIR in PCSS scheme and the efficiency of PQSS scheme corresponding to this PCSS scheme, the model adopted here is a most fundamental one, i.e. we assume that all participants are honest, and Eve is the only external attacker. We adopt the default setting that the recipient has successfully received the sender's information accurately when the measurement results of all quantum states are same, where the sender transmits to each receiver information quantum states with a total of $n_I$ particles and a corresponding decoy-states with $n_E$ particles for detecting Eve's eavesdropping during the share distribution phase,here $n_E \ll p$, and $n_I \ll p$.

The Hash function is used in this paper as a polynomial hash \cite{ref27}. Specifically, we make $PH=\{H_x:F^2_p\rightarrow F_p,|x\in F_p\}$, where $H_x(y)=y_1+y_2x$, $y=(y_1, y_2)\in F_p^2$. It is well known that $PH$ is almost universal in $\frac{2}{p}$.

\subsection{Scheme description}
In this subsection, let $c$ be a primitive element of finite field $F_p$. For convenience, let Alice's public identity be $x_0$, where $x_0\in F_p$.

First, describe the process of using the quantum channel to share the secret $(s,k)$ for the MAS $A_1A_2A_4$ from $\varGamma_9$ in Theorem 3.4. Since $\varGamma_9$ has two layers of decomposition $D_1=\{\varGamma_{1,1}, \varGamma_{1,2}\}$ and $D_2=\{\varGamma_{2,1}\}=\{A_1A_2A_4, A_1A_3A_6, A_2A_3A_5\}$, where $\varGamma_{1,1}=\{A_1A_2A_4\}$, $\varGamma_{1,2}=\{A_1A_3A_6, A_2A_3A_5\}$. Therefore, for the MAS $A_1A_2A_4$, we only need to consider how the participants in three sets $A_1, A_2, A_4$ from $\varGamma_{1,1}$ and $\varGamma_{2,1}$ share the secret $(s,k)$.

\begin{figure*}[!t]
	\centering
	\includegraphics[width=4in]{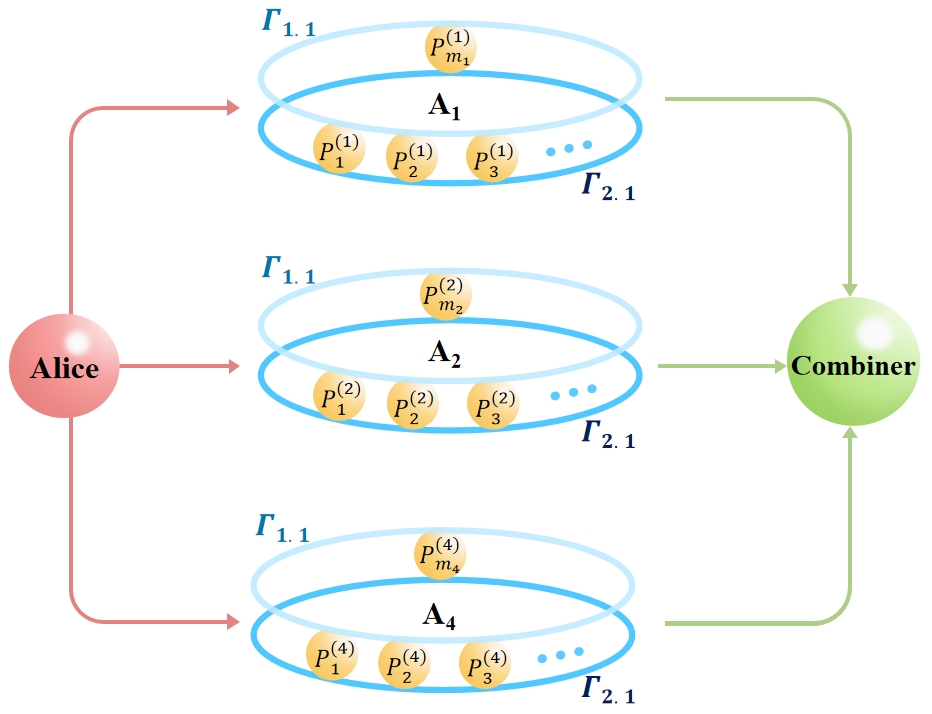}
	\caption{Schematic diagram of the participant in the minimal authorized subset $A_!A_2A_4$ recovering the key $(s,k)$}
	\label{fig_4.1}
\end{figure*}

\textbf{Step 1. Distribution of shares}

Alice manufactures the following particle pairs according to the shares in the proof of Theorem 3.4, as shown in Eq.(20-25). Each participant obtains the shares   through measurement.

\textbf{(1) For the access structure $\varGamma_{1,1}$}

\textbf{(1.1)} Alice first sends the participant $P_{j_i}^{(i)}$ in the set $A_i$ the $n_I$ information particles and the $n_E$ decoy particles,where $j_i \in[m_i],i\in\{1,2,4\}$.
\begin{equation}
	\left\lvert e_{y_{i,j_{i},1}^{(1,1)}}^{H_{K_{j_{i}}^{(0,i)}}(x_{0},x_{i})}\right\rangle,
	\left\lvert e_{y_{i,j_{i},1}^{(1,1)}}^{H_{K_{j_{i}}^{(0,i)}}(x_{0},x_{i})}\right\rangle,\cdots,
	\left\lvert e_{y_{i,j_{i},1}^{(1,1)}}^{H_{K_{j_{i}}^{(0,i)}}(x_{0},x_{i})}\right\rangle,
\end{equation}
\begin{equation}
	\left\lvert e_0^{H_{K_{j_{i}}^{(0,i)}}(x_{0},x_{i})}\right\rangle,
	\left\lvert e_0^{H_{K_{j_{i}}^{(0,i)}}(x_{0},x_{i})}\right\rangle,\cdots,
	\left\lvert e_0^{H_{K_{j_{i}}^{(0,i)}}(x_{0},x_{i})}\right\rangle.
\end{equation}

\textbf{(1.2)} After Alice confirms that $P_{j_i}^{(i)}$ has received the $n_I+n_E$ particles, Alice announces the positions of these $n_E$ decoy particles. $P_{j_i}^{(i)}$ uses the $H_{K_{j_{i}}^{(0,i)}}(x_{0},x_{i})$-th group of unbiased basis to measure these decoy particles (21), and calculates the error rate of the measurement results. If the error rate is higher than the pre-set threshold, $P_{j_i}^{(i)}$ will stop this round and start a new one; otherwise, $P_{j_i}^{(i)}$ uses the $H_{K_{j_{i}}^{(0,i)}}(x_{0},x_{i})$-th group of unbiased basis to measure the information particles received. If the measurement results are all identical to each other, $P_{j_i}^{(i)}$ saves this result as his own share. Otherwise, $P_{j_i}^{(i)}$ requests Alice to start a new round again until the measurement results are identical to each other.

\textbf{(2) For the access structure $\varGamma_{2,1}$}

\textbf{(2.1)} Alice sends the information particles and decoy particles (22-23) to the participant $P_{j_i}^{(i)}$ in the set $A_i$.
\begin{equation}
	\left\lvert e_{y_{i,j_{i},1}^{(2,1)}}^{H_{K_{j_{i}}^{(0,i)}}(x_{0},x_{i})}\right\rangle,
	\left\lvert e_{y_{i,j_{i},1}^{(2,1)}}^{H_{K_{j_{i}}^{(0,i)}}(x_{0},x_{i})}\right\rangle,\cdots,
	\left\lvert e_{y_{i,j_{i},1}^{(2,1)}}^{H_{K_{j_{i}}^{(0,i)}}(x_{0},x_{i})}\right\rangle,
\end{equation}

\begin{equation}
	\left\lvert e_0^{H_{K_{j_{i}}^{(0,i)}}(x_{0},x_{i})}\right\rangle,
	\left\lvert e_0^{H_{K_{j_{i}}^{(0,i)}}(x_{0},x_{i})}\right\rangle,\cdots,
	\left\lvert e_0^{H_{K_{j_{i}}^{(0,i)}}(x_{0},x_{i})}\right\rangle,
\end{equation}
where $j_i \in[m_i],i\in\{1,2,4\}$.

\textbf{(2.2)} Alice sends the information particles and decoy particles (24-25) to the participant $P_{j_i}^{(i)}$ in the set $A_4$.
\begin{equation}
	\left\lvert e_{y_{i,j_{4},2}^{(2,1)}}^{H_{K_{j_{4}}^{(0,4)}}(x_{0},x_{4})}\right\rangle,
	\left\lvert e_{y_{i,j_{4},2}^{(2,1)}}^{H_{K_{j_{4}}^{(0,4)}}(x_{0},x_{4})}\right\rangle, \cdots\!,\!
	\left\lvert e_{y_{i,j_{4},2}^{(2,1)}}^{H_{K_{j_{4}}^{(0,4)}}(x_{0},x_{4})}\right\rangle,
\end{equation}

\begin{equation}
	\left\lvert e_0^{H_{K_{j_4}^{(0,4)}}(x_0,x_4)}\right\rangle,
	\left\lvert e_0^{H_{K_{j_4}^{(0,4)}}(x_0,x_4)}\right\rangle,\cdots,
	\left\lvert e_0^{H_{K_{j_4}^{(0,4)}}(x_0,x_4)}\right\rangle,
\end{equation}

\noindent{where} $j_4\in[m_4]$.

\textbf{(2.3)}$P_{j_i}^{(i)}$ performs the steps similar to the steps (1.2) to ensure that he has obtained the share sent by Alice to himself, where $j_i\in[m_i],i\in\{1,2,4\}$.

\textbf{Step 2. Reconstruction of sub-keys}

For ease of description, we assume that the participants in set $A_i$ pass their respective shares in the protocol in the natural order, where $ i\in\{1,2,4\}$.

\textbf{Case (1):} The reconstruction process of $f_1(x_i)$ for access structure $\varGamma_{1,1}$, where $i\in\{1,2,4\}$.

\textbf{(1.1)} $P_1^{(i)}$ chooses a random number $n_1^{(i)}\in F_p$, then creates the information particles as shown in the following Eq.(26) and creates the decoy particlesas shown in Eq.(27), where $j_i=1$.
\begin{equation}
	\left\lvert e_{log_cK(i,i)}^{n_1^{(i)}}\right\rangle,
	\left\lvert e_{log_cK(i,i)}^{n_1^{(i)}}\right\rangle\!,\cdots\!,\!
	\left\lvert e_{log_cK(i,i)}^{n_1^{(i)}}\right\rangle,
\end{equation}

\begin{equation}
	\left\lvert e_0^{H_{K^{(i,i)}}(x_i,x_i)} \right\rangle,
	\left\lvert e_0^{H_{K^{(i,i)}}(x_i,x_i)} \right\rangle\!,\cdots\!,\!
	\left\lvert e_0^{H_{K^{(i,i)}}(x_i,x_i)} \right\rangle.
\end{equation}
For convenience, denote the state of the information particle in Eq.(26) as $\left\langle \varphi_1^{(i)} \right\lvert$. Then, $P_1^{(i)}$ randomly inserts the decoy particles in Eq.(27) into the information particles in Eq.(26) and sends all particles to $P_2^{(i)}$.

\textbf{(1.2)} After $P_2^{(i)}$ confirms that he has received the $n_I+n_E$ particles, $P_1^{(i)}$ announces the positions of these $n_E$ decoy particles. $P_2^{(i)}$ uses the $H_{K^{(i,i)}}(x_{i},x_{i})$-th group of unbiased basis to measure these decoy particles, and calculates the error rate of the measurement results. If the error rate is higher than the pre-set threshold, $P_2^{(i)}$ will stop this operation and require $P_1^{(i)}$ to start a new round; otherwise, he will continue with the following actions.

\textbf{(1.3)} $P_2^{(i)}$ removes these decoy particles and performs the unitary transformation $U_{y_{i,2}^{(1,1)},0}$ on the $n_I$ information particles. The generated quantum state is denoted as $\left\langle \varphi_{2_i}^{(i)} \right\lvert$. Then, $P_2^{(i)}$ creates $n_E$ decoy particle state $\left\lvert e_0^{H_{K^{(i,i)}}(x_i,x_i)} \right\rangle$ and randomly inserts them into $n_I$ information particles $\left\langle \varphi_2^{(i)} \right\lvert$. Then, $P_2^{(i)}$ sends all particles to $P_3^{(i)}$.

\textbf{(1.4)} The rest can be executed in the same way. Finally, $P_{n_i}^{(i)}$ will obtain $n_I$ information particles, and perform a unitary transformation $U_{y^{(1,1)}_{i,n_i},0}$ on these $n_I$ particles. And denote the obtained each quantum state as $\left\langle \varphi_{n_i}^{(i)} \right\lvert$. Next, $P_{n_i}^{(i)}$ creates $n_E$ decoy particle state $\left\lvert e_0^{H_{K^{(i,i)}}(x_i,x_i)} \right\rangle$ and randomly inserts them into $n_I$ information particles $\left\langle \varphi_{n_i}^{(i)} \right\lvert$. Then, send all particles to $P_1^{(i)}$.

\textbf{(1.5)} $P_{n_i}^{(i)}$ announces the positions of these $n_E$ decoy particles, uses the \\$H_{K^{(i,i)}}(x_{i},x_{i})$-th group of unbiased basis to measure the decoy particle sequence, and calculates the error rate of the measurement results. If the error rate is higher than the pre-set threshold, $P_1^{(i)}$ will stop this operation and require $P_{n_i-1}^{(i)}$ to start a new round; otherwise, he will continue with the following operations.

\textbf{(1.6)} $P_1^{(i)}$ removes these decoy particle and executes unitary transformation $U_{-log_{c}K(i,i)+y_{i,1}^{(1,1)},0}$ on the received information particles. Then $P_1^{(i)}$ uses the $n_I$-th group of unbiased basis to measure the $n_I$ information particles. If the measurement results obtained are identical to each other, retain this measurement result; otherwise, return to step (1.1) to start a new round.

\textbf{Case (2):} The reconstruction of $\lambda_{i,j}$ for the access structure $\varGamma_{2,1}$, $i\in\{1,2,4\}$, $j\in[2]$.

\textbf{(2.1)} $P_1^{(i)}$ chooses a random number $n_1^{(i)}\in F_p$, then manufactures the information particles as shown in the Eq.(24), and manufactures the decoy particles of Eq.(25), where $j_i=1$. For convenience, denote the state of the information particle in Eq.(26) as $\left\langle \varPsi_1^{(i)} \right\lvert$. Then, $P_1^{(i)}$ randomly inserts the decoy particles in Eq.(27) into the information particles Eq.(26), and sends all particles to $P_2^{(i)}$.

\textbf{(2.2)} After $P_2^{(i)}$ confirms receiving the $n_I+n_E$ particles, $P_1^{(i)}$ announces the positions of these $n_E$ decoy particles. $P_2^{(i)}$ uses the $H_{K^{(i,i)}}(x_{i},x_{i})$-th group of unbiased basis to measure the decoy particles received, and calculates the error rate of the measurement results. If error rate is higher than the pre-set threshold, $P_2^{(i)}$ will stop this operation, and require $P_1^{(i)}$ to start a new round; otherwise, he will continue with the following operations.

\textbf{(2.3)} $P_2^{(i)}$ removes these decoy particles and performs the unitary transformation $U_{y_{i,2,j}^{(2,1)},0}$ on the information particles. denote the state of these information particle as $\left\langle \varphi_2^{(i)} \right\lvert$. Then, $P_2^{(i)}$ manufactures $n_E$ decoy particle $\left\langle \varPsi_2^{(i)} \right\lvert$ and randomly inserts them into these $n_I$ information particles $\left\langle \varPsi_2^{(i)} \right\lvert$. After that, $P_2^{(i)}$ sends all particles to $P_3^{(i)}$.

\textbf{(2.4)} The rest can be executed in the same way. Finally, $P_{n_i}^{(i)}$ performs the unitary transformation $U_{y_{i,2,j}^{(2,1)},0}$ on the $n_I$ information particles, and denote the state of these information particle as $\left\langle \varPsi_{n_i}^{(i)} \right\lvert$. Then $P_{n_i}^{(i)}$ creates $n_E$ decoy particles $\left\langle \varPsi_n^{(i)} \right\lvert$ and randomly inserts them into these $n_I$ information particles, and $P_{n_i}^{(i)}$ sends all particles to $P_1^{(i)}$.

\textbf{(2.5)} $P_{n_i}^{(i)}$ announces the positions of these $n_E$ decoy particles, then $P_1^{(i)}$ uses the $H_{K^{(i,i)}}(x_{i},x_{i})$-th group of unbiased basis to measure the decoy particles received, and calculates the error rate of the measurement results. If the error rate is higher than the pre-set threshold, $P_1^{(i)}$ will stop this operation and require $P_{n_i}^{(i)}$ to start a new round; otherwise, he will continue with the following operation.

\textbf{(2.6)} $P_1^{(i)}$ removes these decoy particles and performs the unitary transformation $U_{-log_{c}K(i,i)+y_{i,1,j}^{(2,1)},0}$ on the received information particles. Then, $P_1^{(i)}$ uses the $n_1^{(i)}$-th group of unbiased basis to measure these $n_I$ information particles, if they are identical to each other, then the measurement results are retained; otherwise, it will return to step (2.1) to start a new round.

\begin{remark}
	In Case (2) of this scheme, for $i\in[2]$, $P_1^{(i)}$ performs the unitary transformation $U_{y^{(2,1)}_{i,l,1},0}$; for $i=4$, $P_1^{(4)}$ performs the unitary transformation $U_{y^{(2,1)}_{i,l,j},0}$, where $j\in[2]$.
\end{remark}

\textbf{Step3. Recongstraction of the secret.}

When the combiner needs to access the MAS $A_1A_2A_4$, Alice sets the public identity of the combiner to $y_0$ and the private key between the combiner and $A_i$ to $C^{(0,i)}$, which is owed by each participant from the set $A_i$, where $(y_0, C^{(0,i)})\in F_p^2$.

First, according to the system convention in the sub-key reconstruction of this section, if $P_1^{(i)}(i\in\{1,2,4\})$ uses the $n_1^{(i)}$-th group of unbiased basis to measure the $n_I$ information quantum state obtained, and the measurement results are all equal, then $P_1^{(i)}$ has obtained the sub - key $f_1(x_i)$, or $\lambda_{i,j}$, where $i\in\{1,2,4\}, j\in[2]$.

\textbf{Case (1):} The reconstruction $s$ for the access structure $\varGamma_{1,1}$.

\textbf{(1.1)} According to the operation diagram in Fig.3 of this scheme, for the access structure $\varGamma_{1,1}$, $P_1^{(i)}$ creates the following sequences
\begin{equation}
	\left\lvert e_{f_1(x_i)}^{log_cC^{(0,i)}} \right\rangle,
	\left\lvert e_{f_1(x_i)}^{log_cC^{(0,i)}} \right\rangle, \cdots,
	\left\lvert e_{f_1(x_i)}^{log_cC^{(0,i)}} \right\rangle,
\end{equation}
\begin{equation}
	\left\lvert e_0^{H_{C^{(0,i)}}(y_0,x_i)} \right\rangle,
	\left\lvert e_0^{H_{C^{(0,i)}}(y_0,x_i)} \right\rangle, \cdots,
	\left\lvert e_0^{H_{C^{(0,i)}}(y_0,x_i)} \right\rangle.
\end{equation}
Then, $P_1^{(i)}$ randomly inserts the decoy particles in Eq.(29) into the information particles in Eq.(28), and $P_1^{(i)}$ sends all particles to the combiner.

\textbf{(1.2)} After confirming the receipt of the $n_I+n_E$ particles, $P_1^{(i)} (i\in\{1,2,4\})$ announces the positions of these $n_E$ decoy particles. The combiner uses the $H_{C^{(0,i)}}(y_{0},x_{i})$-th group of unbiased basis to measure the decoy particles received, and calculates the error rate of the measurement results. If the error rate is higher than the pre-set threshold, this operation will be stopped, and $P_1^{(i)}$ will be required to start a new round; otherwise, he will continue with the following operations.

\textbf{(1.3)} The combiner removes these decoy particles, and then uses the $log_cc^{(0,i)}$-th group of unbiased basis to  measure the $n_I$ information particles. If all the obtained measurement results are identical to each other, the measurement result is retained; otherwise, it will return to step (2.1) to start a new round.

\textbf{(1.4)} After obtaining $f_{1}(x_{i})(i\in\{1,2,4\})$, the secret $s$ is obtained by solving the system of equations (12).

\textcolor{blue}{\textbf{Case (2):} The reconstruction $k$ for the access structure $\varGamma_{2,1}$.}

\textbf{(2.1)} According to the operation diagram in Fig.3, $P_1^{(i)}$ creates the following participles
\begin{equation}
	\left\lvert e_{\lambda_{i,j}}^{log_cC^{(0,i)}} \right\rangle,
	\left\lvert e_{\lambda_{i,j}}^{log_cC^{(0,i)}} \right\rangle, \cdots,
	\left\lvert e_{\lambda_{i,j}}^{log_cC^{(0,i)}} \right\rangle,
\end{equation}
\begin{equation}
	\left\lvert e_0^{H_{C^{(0,i)}}(y_0,x_i)} \right\rangle,
	\left\lvert e_0^{H_{C^{(0,i)}}(y_0,x_i)} \right\rangle, \cdots,
	\left\lvert e_0^{H_{C^{(0,i)}}(y_0,x_i)} \right\rangle.
\end{equation}
Then, $P_1^{(i)}, (i\in\{1,2,4\})$ randomly inserts the decoy particles in Eq.(31) into the information particles in Eq.(30), and sends all particles to the combiner.

\textbf{(2.2)} After the combiner confirms the receipt of the  $n_I+n_E$ particles, $P_1^{(i)} (i\in\{1,2,4\})$ announces the positions of these $n_e$ decoy particles. The combiner uses the $H_{C^{(0,i)}}(y_{0},x_{i})$-th group of unbiased basis to measure received decoy particles and calculates the error rate of the measurement results. If error rate is higher than the pre-set threshold, this operation will be stopped, and $P_1^{(i)} (i\in\{1,2,4\})$ will be required to start a new round; otherwise, he will continue with the following operations.

\textbf{(2.3)} The combiner removes these decoy particle. Then, he uses the $log_cc^{(0,i)}$-th group of unbiased basis to measure the $n_I$ information particles. If measurement results are identical to each other, keep the measurement results; otherwise, return to step (2.2) and start a new round.

\textbf{(2.4)} After obtaining $\lambda_{i,j}(i\in\{1,2,4\}, j\in[2])$, the combiner solves the system of Eq.(18) to get the secret $k$.

As can be seen from the above analysis, the MAS $A_1A_2A_4$ has recovered the secret $(s,k)$ in this scheme. For the MAS $A_1A_3A_6$ and $A_2A_3A_5$ in $\varGamma_9$, the combiner can use a similar method as above to reconstruct the secret $(s,k)$.

Using the method of establishing PQSS on $\varGamma_9$, the PQSS on minimal access structure $\varGamma_i(i\in\{7,8,10,11,12\})$ can also be similarly established.

\begin{remark}
We systematically develop these efficient PQSS schemes in this section. The security analysis of this scheme is similar to that of the scheme in \cite{ref14}, so we will not be discussed in detail here.	
\end{remark}

\section{Idealized information rate and efficiency}

\subsection{Idealized information rate}

The information rate $\rho_c(\varGamma_0)$ can be used to measure the efficiency of a PCSS scheme based on the minimal access structure $\varGamma_0$. A higher optimal information rate indicates that there is an optimal scheme for this access structure,which corresponds to lower communication costs during the share distuibution phase\cite{ref28, ref29, ref30}. In this section, the concept of the information rate of the PCSS scheme is extended to that of PQSS scheme, and the definition of the optimal idealized information rate of the minimal access structure is given.

\begin{definition}
	If there is a PQSS scheme $\sum$ which can realize the minimal access structure $\varGamma_0$, and the key space and share space are assigned with equal probability in the $p$ dimensional Hilbert space, then the idealized information rate of this scheme $\sum$ is defined as the ratio
	\begin{eqnarray*}
		\rho_Q(\sum)=\frac{log_p|S|}{\max_{P_i\in P}\{log_p|Q(P_i)|\}}.
	\end{eqnarray*}
\end{definition}
\noindent Here, $log_p|Q(P_i)|$ represents the total number of particles of the space used by share set $S(P_i)$, and $S$ represents the master key set, where $P_i\in P, i\in[n]$.

For a given minimal access structure $\varGamma_0$, the upper bound of the idealized information rate of all PQSS scheme which can realize it is called the optimal idealized information rate, denoted as $\rho^*_Q(\varGamma_0)$.

\begin{remark}
	It is obvious that the idealized information rate essentially ignores the interference of noise in the quantum channel and does not consider the quantum state used for eavesdropping detection.
\end{remark}

 We first calculate the idealized information rate of each participant in this scheme in Section 4. It is easy to see from the distribution phase of this scheme that the number of information particles received by each participant in set $A_i$ is the same. Therefore, we only need to calculate the total number of information particles received by the representative element $P_1^{(i)}$ in the set $A_i$, where $i\in[6]$. For example, in Theorem 3.4, the share distributed by Alice to $P_1^{(1)}$ is $(y_{1,1}^{(1,1)},y_{1,1}^{(1,2)},y_{1,1,1}^{(2,1)})\in F_p^3$. Thus, the total number of information particles received by $P_1^{(1)}$ is $log_p|Q(P_1^{(1)})|=log_pp^{3\Gamma_I}=3\Gamma_I$. In this way, the idealized information rate of $P_1^{(1)}$ is $\rho(P_1^{(1)})=\frac{log_p|S|}{log_p|Q(P_1^{(1)})|}=\frac{2}{3}\frac{1}{\Gamma_I}$, where $log_p|S|=log_pp^2=2$. Using a similar method, the total number of information  particles received by $P_1^{(i)}(i\in[2,6])$ and the idealized information rate can be calculated, seeing Table 1.

\begin{table*}[htbp]
	\centering
	\caption{Total number of information particles and idealized information rate}
	\label{tab:information_particles}
	\setlength{\tabcolsep}{3pt}
	\small
	\begin{tabular}{c *{6}{>{$}c<{$}}}
		\toprule
		& P_1^{(1)} & P_1^{(2)} & P_1^{(3)} & P_1^{(4)} & P_1^{(5)} & P_1^{(6)} \\
		\midrule
		Total number of information particles & 3\Gamma_I & 3\Gamma_I & 3\Gamma_I & 3\Gamma_I & 2\Gamma_I & 2\Gamma_I \\
		Idealized information rate  & 2 \frac{1}{\Gamma_I} & \frac{2}{3}\frac{1}{\Gamma_I} & \frac{2}{3}\frac{1}{\Gamma_I} & \frac{2}{3}\frac{1}{\Gamma_I} & \frac{1}{\Gamma_I} & \frac{1}{\Gamma_I} \\ % ??????????1?
		\bottomrule
	\end{tabular}
\end{table*}

According to the definition 5.1, the idealized information rate of this scheme is $\frac{2}{3}\frac{1}{\Gamma_I}$. Furthermore, since this scheme is a PQSS scheme based on the access structure $\varGamma_9$ with OIR $\frac{2}{3}$, so we have
\begin{eqnarray*}
	\rho_Q^*(\Gamma_9)=\frac{2}{3}\frac{1}{\Gamma_I}.
\end{eqnarray*}

Using the same method, we can obtain that
\begin{eqnarray*}
	\rho_Q^*(\Gamma_i)=\frac{2}{3}\frac{1}{\Gamma_I}.
\end{eqnarray*}
where $i\in\{7,8,10,11,12\}$.

Since the PCSS scheme built on $\varGamma_i$ with $\rho_C^*(\Gamma_i)$ requires less space for share proportions during the secret distribution phase compared to other PCSS schemes, and the PQSS scheme proposed in this paper is built on $\varGamma_i$ with $\rho_C^*(\Gamma_i)$, it is evident that the PQSS scheme with $\rho_Q^*(\Gamma_i)$ achieves the lowest quantum communication cost during the share distribution phase among all schemes, where $i\in [7,12]$.

\subsection{Efficiency}

The efficiency of quantum secret sharing schemes has been discussed in multiple papers \cite{ref31, ref32, ref33, ref34, ref35}. \cite{ref31, ref32, ref33} Mainly discussed the efficiency during the secret reconstruction phase of the scheme, which is defined by the maximum communication cost of each authorized subset in the access structure of the scheme. Therefore, the efficiency of the entire process of the scheme has not been studied. \cite{ref35} gives the definition of efficiency in the quantum key distribution scheme. This definition follows the approximation of the Holevo bound, requiring the optimization of quantum resources and the minimization of classical communication overhead. If the scheme in \cite{ref35} does not involve classical communication, the efficiency of the quantum key distribution scheme can be simplified to the following form:
\begin{eqnarray*}
	\eta=\frac{c}{q},
\end{eqnarray*}

\noindent{where} $c$ represents the number of shared classical bits, and $q$ represents the total number of particles prepared in the scheme, including the total information number of particles $q_I$ and decoy particle $q_E$.

In essence, from the perspective of the MAS contained in the access structure, the scheme in \cite{ref35} can be regarded as the case that its access structure contains only one MAS, i.e. the efficiency of its scheme is precisely that of its unique MAS. Based on this, this paper extends the concept of efficiency from \cite{ref35}, introducing the efficiency of any MAS within a general access structure.

\begin{definition}
	Let $\varGamma_0=\{MAS_1,MAS_2,\cdots,MAS_l\}$ be a quantum access structure, where $MAS_i(i\in[l])$ is the minimal authorized subset, and $\varSigma$ be a PQSS scheme established on $\varGamma_0$. Then the efficiency of the $MAS_i$  in scheme $\varSigma$ is defined as
	\begin{eqnarray*}
		\eta(MAS_i)=\frac{c^{(i)}}{q_{I}^{(i)}+q_{E}^{(i)}+b^{(i)}},
	\end{eqnarray*}
	\noindent{where} $c^{(i)}$ represents the number of shared classical bits of the $MAS_i$, $q_{I}^{(i)}$ represents the total number of information particles required in the $MAS_i$ $q_{E}^{(i)}$ dose the total number of particles required for detecting Eve's eavesdropping, and $b^{(i)}$ dose the total number of classical communication bits in this scheme.
\end{definition}

In this scheme proposed in section 4, since each participant can determines whether the information received is accurate, i.e. when the measurement results of $\varGamma_I$ information particle are identical to each other, he can be certain that the state of his receiving information particle is correct; in the same way, the receiver can detect whether there is an external eavesdropper Eve interfering during the process of his receiving decoy particles. Therefore, there is no need for classical communication in this scheme. So we can get that $b^{(i)}=0$.

We presents the efficiency of three minimal authorized subsets as follows.
\begin{equation}
	\begin{aligned}
		&\eta(A_1A_2A_4)\\
		&=\frac{c_{A_1A_2A_4}}{q_{A_1A_2A_4}}\\
		&=\frac{2}{3}.\frac{1}{(\Gamma_{I}+\Gamma_{E})(n_{1}+n_{2}+n_{4}+3)}\\
		&=\frac{2}{3}\frac{1}{\Gamma_{I}}\frac{1}{(1+d)(n_{1}+n_{2}+n_{4}+3)}\\
		&=\rho_Q^*(\Gamma_9)\frac{1}{(1+d)(n_1+n_2+n_4+3)}
	\end{aligned}
\end{equation}
Where  $q_{E}^{(i)}=n_13\Gamma_I+3n_2\Gamma_I+3n_4\Gamma_I+3\Gamma_I$, $q_{E}^{(i)}=3n_1\Gamma_E+3n_2\Gamma_E+3n_4\Gamma_E+3\Gamma_E$   .Here, $d=\Gamma_{E}/\Gamma_{I}$.
Using the above calculation method, we can obtain:
\begin{equation}
	\eta(A_1A_3A_6)==\rho_Q^*(\Gamma_9)\frac{1}{(1+d)(n_1+n_3+n_6)},
\end{equation}
\begin{equation}
	\eta(A_2A_3A_5)==\rho_Q^*(\Gamma_9)\frac{1}{(1+d)(n_2+n_3+n_5)}.
\end{equation}

From Eq.(32-34), it can be seen that if the idealized information rate of $\varGamma_9$ is higher, then the efficiency of each of its MAS is also higher. This also indicates that the idealized information rate can directly reflect the efficiency of different MAS in sharing the secret.

From the above analysis, it can be known that this paper uses the parameters in the PCSS scheme and single photons to establish a PQSS scheme on the minimal access structure $\varGamma_9$. From Eq.(31-33), it can be known that the efficiency of the MAS in minimal access structure $\varGamma_9$ has all reached the optimal values.

Using the same method, the optimal idealized information rate on minimal access structure $\varGamma_i(i\in\{7,8,10,11,12\})$ and the optimal efficiency of their MAS can be obtained.

\section{Conclusion}
 
A generic construction method for the efficient PQSS scheme on hypercycle quantum access structure is presented in this paper, and we uses the parameters in the classical schemes established on these access structures to generate the corresponding single photons. These schemes can provide a practical theoretical guarantee for the communication of light quantum network structure. Because the entire process of these schemes does not require any quantum storage, and the data measured by participant can be saved, they are relatively easier to implement compared to entangled states schemes under the existing experimental conditions,

In essence, these quantum access structures can also be used to establish secret sharing schemes using multi-particle entangled states. Some scholars have also studied a large number of high-dimensional multi-particle entangled states and used the asymmetry exhibited by these high-dimensional multi-particle entangled states to improve the quantum communication efficiency of hierarchical key structures \cite{ref15,ref16}. However, These studies primarily aim to investigate how to prepare entangled states and utilize them to design efficient QSS schemes ; some scholars have studied the efficiency of QSS schemes during the reconstruction phase, but they have not discussed the efficiency of the schemes throughout the entire phase \cite{ref31, ref32, ref33}; \cite{ref34} studied the efficiency of QSS on a class of restricted access structures, not general access structures. However, at present, no systematic method has been given to establish an efficient PQSS scheme on general quantum access structure using multi-particle entangled states.

\section{Appendix}

This section presents the hyper-circle access structure with seven participants and three hyper-edges and their optimal information rates, as shown in Table 2. The access structures corresponding to each serial number are distributed as follows:
$\varGamma^1(1-7), \varGamma^2(8-18), \varGamma^3(19-22), \varGamma^4(23), \varGamma^5(24-27), \varGamma^6(28-34), \varGamma^7(35-47), \varGamma^8(48-56), \varGamma^{9}(57-58), \varGamma^{10}(59-69), \varGamma^{11}(70-74), \varGamma^{12}(75-83)$, where $\varGamma^i$ denotes the i-th isomorphic class of the hyper-circle access structure with three hyper-edges, $i \in {[12]}$.

\begin{table*}[t!]
	\centering
	\caption{Hyper-circle access structures with 7 participants and 3 hyper-edges}
	\label{tab:6.1}
	\scalebox{0.85}{
		\begin{tabular}{c l | c l | c l}
			\toprule
			No. & Access Structure $\varGamma$ & No. & Access Structure $\varGamma$ & No. & Access Structure $\varGamma$ \\
			\midrule
			1 & \{12345,16,17\} & 29 & \{123456,123457,567\} & 57 & \{1234,1267,456\}  \\
			2 & \{1234,156,17\} & 30 & \{123456,123457,234567\} & 58 & \{1234,167,456\} \\
			3 & \{123,145,167\} & 31 & \{123456,123457,4567\} & 59 & \{12345,23456,1267\} \\
			4 & \{12345,126,127\} & 32 & \{123456,12347,4567\} & 60 & \{12345,12367,3467\} \\
			5 & \{1234,1256,127\} & 33 & \{123456,12347,34567\} & 61 & \{12345,12367,23467\} \\
			6 & \{12345,1236,1237\} & 34 & \{123456,234567,12347\} & 62 & \{12345,1267,2367\} \\
			7 & \{12345,12346,12347\} & 35 & \{1234,4567,3567\} & 63 & \{1123456,12347,457\} \\
			8 & \{12345,12346,17\} & 36 & \{12345,23457,167\} & 64 & \{123456,12347,23457\} \\
			9 & \{126,12345,17\} & 37 & \{12345,1267,3457\} & 65 & \{123456,127,237\} \\
			10 & \{1234,156,157\} & 38 & \{12345,34567,127\} & 66 & \{123456,12347,3457\} \\
			11 & \{12345,1236,17\} & 39 & \{12345,167,567\} & 67 & \{1123456,1237,3457\} \\
			12 & \{1234,1235,167\} & 40 & \{12345,1267,567\} & 68 & \{123456,1237,347\} \\
			13 & \{1234,1256,17\} & 41 & \{12345,1267,3467\} & 69 & \{123456,1237,2347\} \\
			14 & \{1234,1235,1267\} & 42 & \{123456,12347,67\} & 70 & \{12345,1237,3467\} \\
			15 & \{1234,125,167\} & 43 & \{123456,17,67\} & 71 & \{12345,23467,1237\} \\
			16 & \{12345,12346,127\} & 44 & \{123456,3457,127\} & 72 & \{12345,23467,127\} \\
			17 & \{12345,1236,127\} & 45 & \{123456,127,67\} & 73 & \{12345,2367,127\} \\
			18 & \{1234,1235,1267\} & 46 & \{123456,3457,67\} & 74 & \{12345,1456,1267\} \\
			19 & \{12345,1236,147\} & 47 & \{123456,127,567\} & 75 & \{1234,1235,3467\} \\
			20 & \{1234,1256,137\} & 48 & \{1234,4567,127\} & 76 & \{1234,2567,125\} \\
			21 & \{12345,126,137\} & 49 & \{1234,1267,4567\} & 77 & \{12345,12346,457\} \\
			22 & \{12345,1236,1247\} & 50 & \{1234,4567,17\} & 78 & \{12345,12346,23457\} \\
			23 & \{1235,1367,1247\} & 51 & \{12345,1237,567\} & 79 & \{12345,126,267\} \\
			24 & \{12345,34567,1267\} & 52 & \{12345,567,17\} & 80 & \{12345,12346,3457\} \\
			25 & \{123456,123457,67\} & 53 & \{12345,34567,17\} & 81 & \{12345,1236,3457\} \\
			26 & \{123456,34567,127\} & 54 & \{12345,127,567\} & 82 & \{12345,1236,367\} \\
			27 & \{123456,1237,4567\} & 55 & \{12345,4567,127\} & 83 & \{12345,1236,2367\} \\
			28 & \{12345,12367,34567\} & 56 & \{12345,4567,17\} &  &  \\
			\bottomrule
		\end{tabular}
	}
\end{table*}

In Table 2, the access structures numbered 1 - 23 belong to the hyper-stars, and the specific implementation schemes with OIR have been given in \cite{ref13}. According to Theorem 3.3, the OIR of the hyper-circles numbered 24 - 34 is 1; according to Theorem 3.4, the OIR of the hyper-circles numbered 35 - 74 is $\frac{2}{3}$; according to Theorem 3.5, the OIR of the hyper-circles numbered 75 - 83 is $\frac{2}{3}$.
	
	Using the methods in Section 4, the efficient PQSS schemes for the above small-scale quantum access structures can be established. These schemes also provide a powerful tool for constructing PQSS schemes on quantum access structures formed by a larger number of participants.
%==============================================================================
%

\section*{Acknowledgment}
This work was supported by the National Natural Science Foundation of China 12201484.

\vspace{0.5cm}


\begin{thebibliography}{51}

\bibitem{ref1}
Hillery M, Bužek V, Berthiaume A. Quantum secret sharing. Physical Review A, 1999, 59(3): 1829-1834.

\bibitem{ref2}
Cleve R, Gottesman D, Lo H K. How to share a quantum secret. Physical Review Letters, 1999, 83(3): 648-651.

\bibitem{ref3}
Gottesman D. Theory of quantum secret sharing. Physical Review A, 2000, 61(4): 042311.

\bibitem{ref4}
Lu H, et al. Secret sharing of a quantum state. Physical Review Letters, 2016, 117(3): 030501.

\bibitem{ref5}
Wang J, et al. Quantum-secret-sharing scheme based on local distinguishability of orthogonal multiqudit entangled states. Physical Review A, 2017, 95(2): 022320.

\bibitem{ref6}
Yuuki T, Tatsuaki O, Nobuyuki I. Threshold quantum cryptography. Physical Review A, 2005, 71(1): 012314.

\bibitem{ref7}
Tavakoli A, Herbauts I, Zukowski M, Bourennane M. Secret sharing with a single d-level quantum system. Physical Review A, 2015, 92(3): 030302.

\bibitem{ref8}
Karimipour V, Asoudeh M. Quantum secret sharing and random hopping: Using single states instead of entanglement. Physical Review A, 2015, 92(3): 030301.

\bibitem{ref9}
Bai C M, Li Z H, Xu T T, Li Y M. Quantum secret sharing using the d-dimensional GHZ state. Quantum Information Processing, 2017, 16: 59.

\bibitem{ref10}
Wang J, et al. Quantum-secret-sharing scheme based on local distinguishability of orthogonal multiqudit entangled states. Physical Review A, 2017, 95(2): 022320.

\bibitem{ref11}
Di Crecenzo G, Galdi C. Hypergraph decomposition and secret sharing. Discrete Applied Mathematics, 2009, 157(5): 928-946.

\bibitem{ref12}
Nascimento A, Mueller-Quade J, Imai H. Improving quantum secret-sharing schemes. Physical Review A, 2001, 64(4): 042311.

\bibitem{ref13}
Malik M, et al. Multi-photon entanglement in high dimensions. Nature Photonics, 2016, 10(4): 248-252.

\bibitem{ref14}
Li L, Li Z. The quantum secret sharing schemes based on hyperstar access structures. Information Sciences, 2024, 664: 120202.

\bibitem{ref15}
Pivoluska M, Huber M, Malik M. Layered quantum key distribution. Physical Review A, 2018, 97(3): 032312.

\bibitem{ref16}
Jia X X, Guo Y S, Luo X Y, Wang D S, Zhang C Y. A perfect secret sharing scheme for general access structures. Information Sciences, 2022, 595: 54-69.

\bibitem{ref17}
Dijk M V. On the information rate of perfect secret sharing schemes. Designs, Codes and Cryptography, 1995, 6(2): 143-169.

\bibitem{ref18}
Senthoor K, Sarvepalli P K. Theory of communication efficient quantum secret sharing. IEEE Transactions on Information Theory, 2022, 68(5): 3164-3186.

\bibitem{ref19}
Jackson W A, Martin K M. Perfect secret sharing schemes on five participants. Designs, Codes and Cryptography, 1996, 9(3): 267-286.

\bibitem{ref20}
Jackson W A, Martin K M. Perfect secret sharing schemes on five participants. Designs, Codes and Cryptography, 1996, 9(3): 267-286.

\bibitem{ref21}
Gharahi M, Dehkordi M H. The complexity of the graph access structures on six participants. Designs, Codes and Cryptography, 2013, 67(2): 169-173.

\bibitem{ref22}
Stinson D R. Decomposition constructions for secret sharing schemes. IEEE Transactions on Information Theory, 1994, 40(1): 118-125.

\bibitem{ref23}
Simmons G J. How to really share a secret. Lecture Notes in Computer Science, 1990, 403: 390-448.

\bibitem{ref24}
Blakley G. Safeguarding cryptographic keys. Proceedings of the National Computer Conference, Berlin, Germany, 1979: 313-317.

\bibitem{ref25}
Stinson D R. An explication of secret sharing schemes. Designs, Codes and Cryptography, 1992, 2: 357-390.

\bibitem{ref26}
Wootters W K, Fields B D. Optimal state-determination by mutually unbiased measurements. Annals of Physics, 1989, 191(2): 363-381.

\bibitem{ref27}
Carter J L, Wegman M N. Universal classes of hash functions. Journal of Computer and System Sciences, 1979, 18(2): 143-154.

\bibitem{ref28}
Ding J, Lin C, Wang H, Xing C. Communication efficient secret sharing with small share size. IEEE Transactions on Information Theory, 2022, 68(1): 659-669.

\bibitem{ref29}
Shao J. Efficient verifiable multi-secret sharing scheme based on hash function. Information Sciences, 2014, 278: 104-109.

\bibitem{ref30}
Song Y, Li Z H, Li Y M, Xin R. The optimal information rate for graph access structures of nine participants. Frontiers of Computer Science, 2015, https://doi.org/10.1007/s11704-015-3255-6.

\bibitem{ref31}
Senthoor K, Sarvepalli P K. Communication efficient quantum secret sharing. Physical Review A, 2019, 100(5): 052313.

\bibitem{ref32}
Senthoor K, Sarvepalli P K. Theory of communication efficient quantum secret sharing. IEEE Transactions on Information Theory, 2022, 68(5): 3164-3186.

\bibitem{ref33}
Huang W, Langberg M, Kliewer J, Bruck J. Communication efficient secret sharing. IEEE Transactions on Information Theory, 2016, 62(12): 7195-7206.

\bibitem{ref34}
Li F L, Chen T Y, Li M, Lin C L. Efficient and Verifiable General Quantum Secret Sharing Based on Special Entangled State. IEEE Internet of Things Journal, 2024, 11(8): 154.

\bibitem{ref35}
Cabello A, J P R L. Quantum key distribution in the Holevo limit. Physical Review Letters, 2000, 85(26): 5635-5638.

\bibitem{ref36}
Senthoor K, Sarvepalli P K. Theory of Communication Efficient Quantum Secret Sharing. IEEE Journal on Selected Areas in Communications, 2024, 42(7): 7.

\end{thebibliography}
\end{document}